\newtheorem{theorem}{Theorem}
\newtheorem{definition}{Definition}
\newtheorem{lemma}[theorem]{Lemma}
    \newcommand{\Cc}{\mathcal{C}}
    \newcommand{\Fc}{\mathcal{F}}
    \newcommand{\Ic}{\mathcal{I}}
    \newcommand{\Lc}{\mathcal{L}}
    \newcommand{\Pc}{\mathcal{P}}
    \newcommand{\Uc}{\mathcal{U}}
\newcommand{\R}{\mathbb{R}}
\DeclareMathOperator*{\argmin}{argmin}
\DeclareMathOperator*{\argmax}{argmax}
\newcommand{\Algo}[1]{Algorithm~\protect\ref{#1}}
\newcommand{\Fig}[1]{Figure~\protect\ref{#1}}
\newcommand{\tab}[1]{Table~\protect\ref{#1}}
\newcommand{\per}{\mathbf{PER}}
\newcommand{\rej}{\mathbf{REJ}}
\newcommand{\astar}{\mathbf{A^{\ast}}}
\newcommand{\osstar}{\mathbf{OS^{\ast}}}
\newcommand{\Uniform}{\mathrm{Uniform}}
\newcommand{\Poissonmad}{\mathrm{Poisson}}
\newcommand{\Normal}{\mathrm{Normal}}
\newcommand{\Exponential}{\mathrm{Exp}}
\newcommand{\Bernoulli}{\mathrm{Bernoulli}}
\newcommand{\Gumbel}{\mathrm{Gumbel}}
\newcommand{\TruncGumbel}{\mathrm{TruncGumbel}}
\newcommand{\sort}{\mathrm{sort}}
\newcommand{\shear}{\mathrm{perturb}}
\newcommand{\keep}{\mathrm{accept}}
\newcommand{\thin}{\mathrm{thin}}
\newcommand{\expectmad}{\mathbb{E}}
\newcommand{\proba}{\mathbb{P}}
\title{A Poisson process model for Monte Carlo}
\author{Chris J. Maddison\\
Department of Computer Science\\
University of Toronto\\
cmaddis@cs.toronto.edu}
\begin{document}
\maketitle

\begin{abstract}
Simulating samples from arbitrary probability distributions is a major research program of statistical computing. Recent work has shown promise in an old idea, that sampling from a discrete distribution can be accomplished by perturbing and maximizing its mass function. Yet, it has not been clearly explained how this research project relates to more traditional ideas in the Monte Carlo literature. This chapter addresses that need by identifying a Poisson process model that unifies the perturbation and accept-reject views of Monte Carlo simulation. Many existing methods can be analyzed in this framework. The chapter reviews Poisson processes and defines a Poisson process model for Monte Carlo methods. This model is used to generalize the perturbation trick to infinite spaces by constructing Gumbel processes, random functions whose maxima are located at samples over infinite spaces. The model is also used to analyze A* sampling and OS*, methods from distinct Monte Carlo families.
\end{abstract}

\section{Introduction}
The simulation of random processes on computers is an important tool in scientific research and a subroutine of many statistical algorithms. One way to formalize this task is to return samples from some distribution given access to a density or mass function and to a pseudorandom number generator that returns independent uniform random numbers. ``Monte Carlo methods'', a phrase originally referring to the casinos of Monte Carlo, is a catchall for algorithms that solve this problem. Many Monte Carlo methods exist for specific distributions or classes of distributions \citep{walker1977alias, devroye}, but there are a few generic principles. One principle is to simulate a Markov chain whose stationary distribution is the distribution of interest.
Work on these Markov chain Monte Carlo methods has exploded over the past few decades, because of their efficiency at sampling from complex distributions in high dimensions. Their downside is that convergence can be slow and detecting convergence is hard. A second principle is propose samples from a tractable distribution and accept them according to a correction factor. These accept-reject Monte Carlo methods are the workhorses of modern statistical packages, but their use is restricted to simple distributions on low dimensional spaces. 

Recently, a research program has developed around another principle for sampling from discrete distributions, the so called ``Gumbel-Max trick''. The trick proceeds by simulating a random function $G : \{1, \ldots, m\} \rightarrow \R$ whose maximum is located at a sample. Sampling therefore reduces to finding the state that maximizes $G$. This trick has the same complexity as better known methods, but it has inspired research into approximate methods and extensions. Methods that abandon exactness for efficiency have considered introducing correlated $G$ with a variety of applications (\citeauthor{papandreou2011perturb}, \citeyear{papandreou2011perturb}; \citeauthor{tarlow2012randomized}, \citeyear{tarlow2012randomized}; \citeauthor{hazan2013perturb}, \citeyear{hazan2013perturb}). \cite{2015arXiv150609039C} consider bandit algorithms for optimizing $G$ over low dimensional spaces when function evaluation is expensive. \cite{maddison2014astarsamp} generalized $G$ with Gumbel processes, random functions over infinite spaces whose maxima occur at samples of arbitrary distributions, and introduced A* sampling, a branch and bound algorithm that executes a generalized Gumbel-Max trick. \cite{kim2016lprelaxsamp} introduced a related branch and bound algorithm tailored to discrete distributions and successfully sampled from a large fully connected attractive Ising model.
Taken together, this view of simulation as a maximization problem is a promising direction, because it connects Monte Carlo research with the literature on optimization.
Yet, its relationship to more established methods has not been clearly expressed.
This chapter addresses that need by identifying a model that jointly explains both the accept-reject principle and the Gumbel-Max trick. 

As a brief introduction, we cover a simple example of an accept-reject algorithm and the Gumbel-Max trick shown in \Fig{fig:intro}. Suppose we are given a positive function $f : \{1, \ldots, m\} \to \R^+$, which describes the unnormalized mass of a discrete random variable $I$,
\begin{align}
\label{eq:example}
\proba(I \in B) = \sum_{i \in B} \frac{f(i)}{\sum_{j=1}^m f(j)}, \quad B \subseteq \{1, \ldots, m\}.
\end{align}
The following algorithms return an integer with the same distribution as $I$. The accept-reject algorithm is,
\begin{enumerate}
\item Sample $J$ uniformly from $\{1, \ldots, m\}$, $U$ uniformly from $[0,\max_{i=1}^m f(i)]$,
\item If $U < f(J)$, return $J$, else go to 1.
\end{enumerate}
We can intuitively justify it by noticing that accepted pair $(J, U)$ falls uniformly under the graph of $f(i)$, \Fig{fig:intro}. The sample $J$, which is accepted or rejected, is often called a \emph{proposal}. The Gumbel-Max trick proceeds by optimizing a random function,
\begin{enumerate}
\item For $i \in \{1, \ldots m\}$ sample an independent Gumbel random variable $G(i)$.
\item Find and return $I^* = \argmax_{i=1}^m \log f(i) + G(i)$.
\end{enumerate}
Because the random values $\log f(i) + G(i)$ can be seen as a perturbed negative energy function, the function $G$ is often called a \emph{perturbation}. Uniform and Gumbel random variables are included among the standard distributions of statistical computing packages. So these algorithms, while inefficient, are simple to program.

\begin{figure}[t]
\begin{center}
\includegraphics{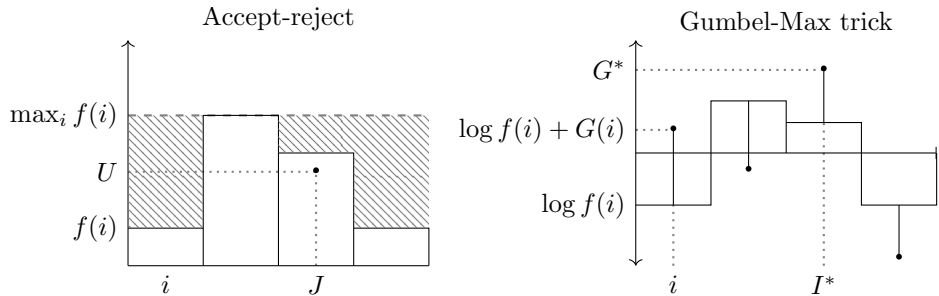}
\caption{Two simple Monte Carlo methods for a discrete distribution described by positive function $f$ via (\ref{eq:example}). The left hand plot shows the first accepted sample $J$ in an accept-reject scheme; note that $U < f(J)$. The right hand plot shows a sample $I^*$ in the Gumbel-Max trick; $I^*$ is the state that achieves the maximum $G^* = \max_i \log f(i) + G(i)$.}
\label{fig:intro}
\end{center}
\end{figure}

Considering their apparent differences and the fact that they have been studied in distinct literatures, it is surprising that both algorithms can be unified under the same theoretical framework.
The framework rests on the study of Poisson processes, a random object whose value is a countable set of points in space \citep{kingman1992poisson, daley2007introduction}. The central idea is to define a specific Poisson process, called an exponential race, which models a sequence of independent samples arriving from some distribution. Then we identify two operations, corresponding to accept-reject and the Gumbel-Max trick, which modify the arrival distribution of exponential races. In this view a Monte Carlo method is an algorithm that simulates the first arrival of an exponential race, and many existing algorithms fall into this framework. 

Section \ref{sec:pp} reviews Poisson processes and studies the effect of operations on their points. Section \ref{sec:er} introduces exponential races and studies the accept-reject and perturb operations. In Section \ref{sec:gup} we construct Gumbel processes from exponential races and study the generalized Gumbel-Max trick. In Section \ref{sec:alg} we analyze A* sampling and OS* \citep{dymetman2012osstar} and show how they use perturb and accept-reject operations, respectively, to simulate the first arrival of an exponential race. All of our Poisson process results are either known or elementary extensions; the correctness and behaviour of the Monte Carlo methods that we study have all been established elsewhere. Our contribution is in identifying a theory that unifies two distinct literatures and in providing a toolset for analyzing and developing Monte Carlo methods.

\section{Poisson processes}
\label{sec:pp}
\subsection{Definition and properties}
A Poisson process is a random countable subset $\Pi \subseteq \R^n$. Many natural processes result in a random placement of points: the stars in the night sky, cities on a map, or raisins in oatmeal cookies. A good generic mental model to have is the plane $\R^2$ and pinpricks of light for all points in $\Pi$. Unlike most natural processes, a Poisson process is distinguished by its complete randomness; the number of points in disjoint subsets are independent random variables, see \Fig{fig:pp}. In this section we review a general Poisson process theory culminating in two theorems, which describe how they behave under the generic operations of removing or relocating their points. In the next section we restrict our view to a specific Poisson process and two specific operations, which correspond to accept-reject and Gumbel-Max. Our study is situated in $\R^n$ for intuition, but these results generalize naturally; for more information, the ideas of this section are adapted from the general treatment in \cite{kingman1992poisson}. Readers familiar with that treatment can safely skip this section

\begin{figure}[t]
\begin{center}
\includegraphics{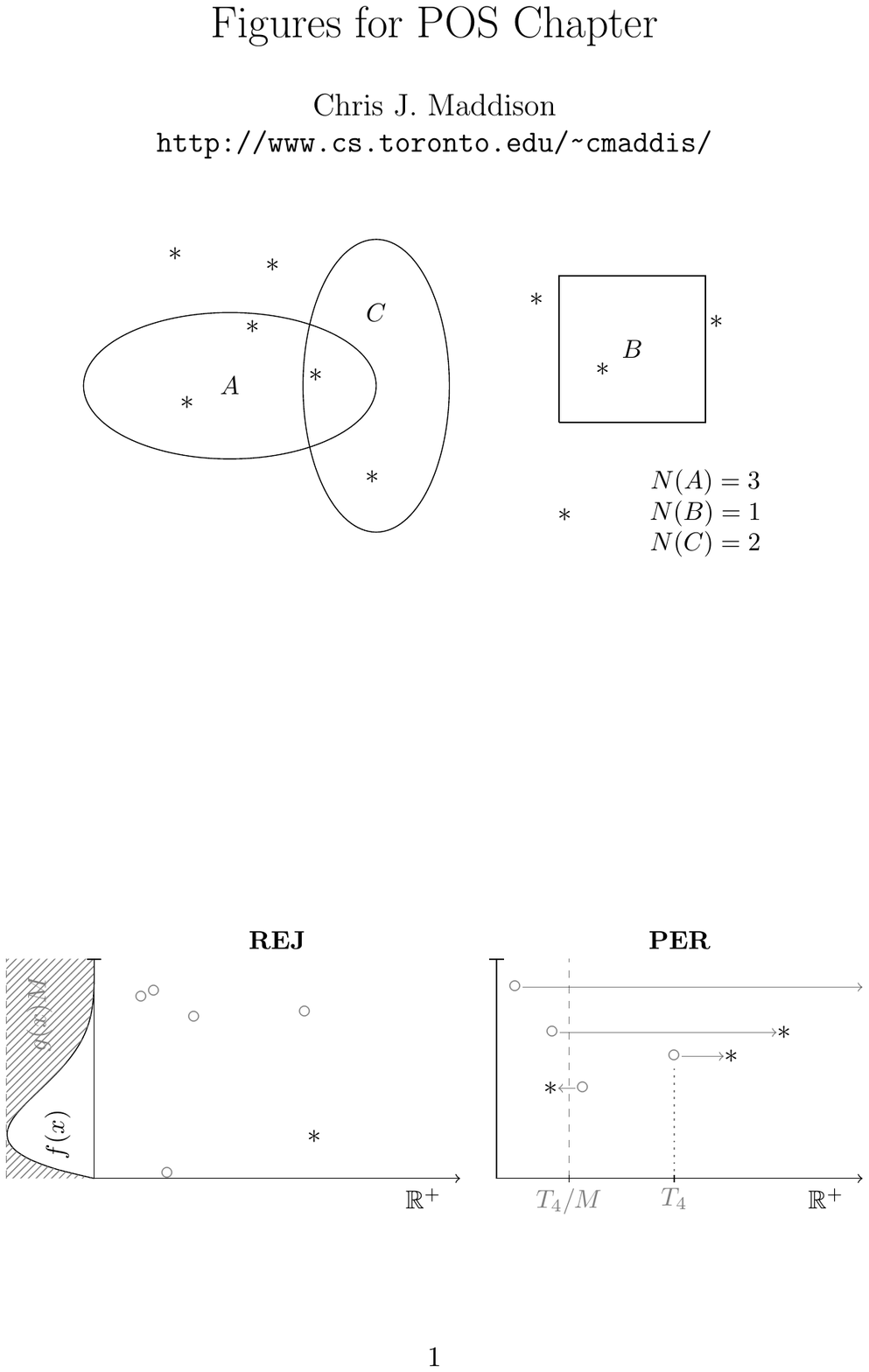}
\caption{The set of $\ast$ is a realization of a Poisson process in the plane. Counts in sets $A, B, C$ are marginally Poisson and are independent for disjoint sets.}
\label{fig:pp}
\end{center}
\end{figure}

To identify a realization of a random countable set $\Pi \subseteq \R^n$, we use counts of points in subsets $B \subset \R^n$,
\begin{align*}
N(B) = \# (\Pi \cap B).
\end{align*}
where $N(B) = \infty$ if $B$ is infinite, see \Fig{fig:pp} again. Counts are nonnegative and additive, so for any realization of $\Pi$ $N(B)$ satisfies
\begin{enumerate}
\item\label{itm:nonneg} (\emph{Nonnegative}) $N(B) \geq 0$,
\item\label{itm:countadd} (\emph{Countably additive}) For disjoint $B_i \subseteq \R^n$, $N(\cup_{i=1}^{\infty} B_i) = \sum_{i=1}^{\infty} N(B_i).$
\end{enumerate}
Set functions from subsets of $\R^n$ to the extended reals $\R \cup \{\infty, -\infty\}$ that are nonnegative and countably additive are called measures. Measure theory is a natural backdrop for the study of Poisson processes, so we briefly mention some basic concepts. In general measures $\mu$ assign real numbers to subsets with the same consistency that we intuitively expect from measuring lengths or volumes in space. If $\mu(\R^n) = 1$, then $\mu$ is a probability distribution. Because it is not possible to define a measure consistently for all possible subsets, the subsets $B \subseteq \R^n$ are restricted here and throughout the chapter to be from the Borel sets, a nice measurable family of subsets. The Borel sets contain almost any set of interest, so for our purposes it is practically unrestricted. Integration of some function $f : \R^n \to \R$ with respect to some measure $\mu$ naturally extends Riemann integration, which we can think about intuitively as the area under the graph of $f(x)$ weighted by the instantaneous measure $\mu(dx)$.
When a measure is equal to the integral of a nonnegative function $f: \R^n \to \R^{\geq 0}$ with respect to $\mu$, we say $f$ is the \emph{density} with respect to $\mu$. 

The Poisson process receives its name from the marginal distribution of counts $N(B)$. $N(B)$ is Poisson distributed on the nonnegative integers parameterized by a rate, which is also its expected value.
\begin{definition}[Poisson random variable]
$N$ is a Poisson distributed random variable on $k \in \{0, 1, \ldots\}$ with nonnegative rate $\lambda \in \R^{\geq 0}$ if
\begin{align*}
\proba(N = k) = \exp(-\lambda) \frac{\lambda^k}{k!}.
\end{align*}
This is denoted $N \sim \Poissonmad(\lambda)$. $N \sim \Poissonmad(0)$ and $N \sim \Poissonmad(\infty)$ are the random variables whose values are $0$ and $\infty$ with probability one. If $N \sim \Poissonmad(\lambda)$, then $\expectmad(N) = \lambda$.
\end{definition}
\noindent The Poisson distribution is particularly suited to modelling random counts, because it is countably additive in the rate. 
\begin{lemma}
\label{lem:po}
If $N_i \sim \Poissonmad(\lambda_i)$ independent with $\lambda_i \in \R^{\geq 0}$, then
\begin{align*}
\sum\nolimits_{i=1}^{\infty} N_i \sim \Poissonmad\left(\sum\nolimits_{i=1}^{\infty} \lambda_i\right).
\end{align*}
\end{lemma}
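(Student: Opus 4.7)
The plan is to handle the finite and infinite cases separately, using probability generating functions as the main tool for the finite case and then taking a monotone limit.

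First I would establish the result for two summands. Direct computation gives the probability generating function
\[
\expectmad[z^{N_i}] = \sum_{k=0}^{\infty} z^k \exp(-\lambda_i) \frac{\lambda_i^k}{k!} = \exp(\lambda_i(z-1))
\]
for $z \in [0,1]$. By independence, $\expectmad[z^{N_1+N_2}] = \exp((\lambda_1+\lambda_2)(z-1))$, which is the PGF of $\Poissonmad(\lambda_1+\lambda_2)$, and since PGFs on $[0,1]$ uniquely determine distributions on $\{0,1,\ldots\}$, this identifies the distribution. (The degenerate cases $\lambda_i = 0$ or $\lambda_i = \infty$ are trivial to handle separately.) A routine induction then extends this to any finite $n$: $S_n := \sum_{i=1}^n N_i \sim \Poissonmad(\Lambda_n)$ where $\Lambda_n = \sum_{i=1}^n \lambda_i$.

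Next I would pass to the countable case by taking $n \to \infty$. Because $N_i \geq 0$, the partial sums $S_n$ are monotone nondecreasing, so $S_\infty := \sum_{i=1}^{\infty} N_i = \lim_n S_n$ exists almost surely as an element of $\{0,1,\ldots\} \cup \{\infty\}$, and continuity of probability from above gives $\proba(S_\infty \leq k) = \lim_n \proba(S_n \leq k)$ for every integer $k$. Let $\Lambda = \sum_{i=1}^\infty \lambda_i$. If $\Lambda < \infty$, then $\Lambda_n \uparrow \Lambda$, and the Poisson CDF $\proba(S_n \leq k) = \sum_{j=0}^k e^{-\Lambda_n}\Lambda_n^j/j!$ converges to $\sum_{j=0}^k e^{-\Lambda}\Lambda^j/j!$, so $S_\infty \sim \Poissonmad(\Lambda)$ as desired. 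If $\Lambda = \infty$, then for every fixed $k$ we have $\proba(S_n \leq k) \to 0$, so $\proba(S_\infty < \infty) = 0$ and $S_\infty = \infty$ almost surely, matching the convention $\Poissonmad(\infty)$.

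The main obstacle is really the bookkeeping around the infinite case: one has to commit to the convention that $\Poissonmad(\infty)$ is the constant random variable $\infty$ (as the statement does), and then verify that the monotone limit of the Poisson CDFs yields precisely this degenerate distribution when $\Lambda = \infty$. Everything else — the PGF computation, the finite induction, the monotone convergence of partial sums — is standard and essentially mechanical.
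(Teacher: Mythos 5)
Your proof is correct, and its overall architecture matches the paper's: establish the two-summand case, extend to finite sums by induction, then handle the countable sum via the monotone limit $\proba(S_\infty \leq k) = \lim_n \proba(S_n \leq k)$ of the partial-sum CDFs. The one genuine difference is the tool used at the base step: the paper computes the convolution directly, writing $\proba(S_2 = k)$ as a sum over $r$ and collapsing it with the binomial theorem to get the $\Poissonmad(\lambda_1+\lambda_2)$ mass function, whereas you identify the distribution through probability generating functions, multiplying $\exp(\lambda_i(z-1))$ under independence and invoking uniqueness of PGFs on $\{0,1,\ldots\}$. The paper's computation is more elementary and self-contained (no appeal to a uniqueness theorem); your PGF route is shorter and less error-prone but imports that standard fact. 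You are also more explicit than the paper about the tail of the argument: you separate the cases $\sum_i \lambda_i < \infty$ and $\sum_i \lambda_i = \infty$ and check that the latter degenerates to the constant-$\infty$ convention for $\Poissonmad(\infty)$, whereas the paper leaves this reading implicit in its final limit display (and similarly sets aside the degenerate rates with a ``without loss of generality''). Both arguments are sound; yours trades a small amount of machinery for cleaner bookkeeping.
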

\begin{proof}

\citep{kingman1992poisson}. Let $S_m = \sum_{i=1}^m N_i$ and assume $\lambda_i > 0$ without loss of generality. Then for $S_2$,
\begin{align*}
\proba(S_2 = k) &= \sum_{r=0}^k \proba(N_1 = r, N_2 = k-r)\\
&= \sum_{r=0}^k \exp(-\lambda_1)\frac{\lambda_1^r}{r!} \exp(-\lambda_2)\frac{\lambda_2^{k-r}}{(k-r)!}\\
&=  \frac{\exp(-\lambda_1 - \lambda_2)}{k!} \sum_{r=0}^k {k \choose r}\lambda_1^r\lambda_2^{k-r}\\
&=  \frac{\exp(-\lambda_1 - \lambda_2)}{k!} (\lambda_1 + \lambda_2)^k.
\end{align*}
By induction Lemma \ref{lem:po} also holds for $S_m$. For infinite sums the events $\{S_m \leq k\}$ are nonincreasing. Thus,
\begin{align*}
\proba(S_{\infty} \leq k) &= \lim_{m \to \infty} \proba(S_m \leq k) = \sum_{j=1}^k \lim_{m \to \infty} \exp\left(- \sum\nolimits_{i=1}^m \lambda_i\right) \frac{(\sum\nolimits_{i=1}^m \lambda_i)^j}{j!}.
\end{align*}
\end{proof}
\noindent Because expectations distribute over infinite sums of positive random variables, the Poisson rate $\mu(B) = \expectmad(N(B))$ must also be a measure.

Instead of starting with a definition of Poisson processes, we work backwards from an algorithmic construction. \Algo{alg:pp} is a procedure that realizes a Poisson process $\Pi$ for a specified mean measure $\mu$. \Algo{alg:pp} iterates through a partition $\{B_i\}_{i=1}^{\infty}$ of $\R^n$. For each $B_i$ it first decides the number of points to place in $\Pi$ by sampling a Poisson with rate given by the measure, $N_i \sim \Poissonmad(\mu(B_i))$. Then, it places $N_i$ points by sampling independently from the probability distribution proportional to $\mu$ restricted to $B_i$. Normally, $X \sim \mathcal{D}$ is just a statement about the marginal distribution of $X$. In the context of an Algorithm box we also implicitly assume that it implies independence from all other random variables. We should note that \Algo{alg:pp} operates on volumes and samples from $\mu$. This is not an issue, if we think of it as a mathematical construction. It would be an issue, if we set out to simulate $\Pi$ on a computer.

\Algo{alg:pp} will occasionally have pathological behaviour, unless we restrict $\mu$ further. First, we require that each subset $B_i$ of the partition has finite measure; if $\mu(B_i) = \infty$, then \Algo{alg:pp} will stall when it reaches $B_i$ and fail to visit all of $\R^n$. If a partition $\{B_i\}_{i=1}^{\infty}$ with $\mu(B_i) < \infty$ exists for measure $\mu$, then $\mu$ is called $\sigma$-finite. Second, we want the resulting counts $N(B_i)$ to match the number of points placed $N_i$. This can be ensured if all of the points $X_{ij}$ are distinct with probability one. It is enough to require that $\mu(\{x\}) = 0$ for all singleton sets $x \in \R^n$. This kind of measure is known as nonatomic.

\begin{algorithm}[t]
\caption{A Poisson process $\Pi$ with $\sigma$-finite nonatomic mean measure $\mu$} \label{alg:pp}
\begin{algorithmic}
	\State Let $\{B_i\}_{i=1}^{\infty}$ be a partition of $\R^n$ with $\mu(B_i) < \infty$
	\State $\Pi = \emptyset$
	\For{$i=1$ to $\infty$}
		\State $N_i \sim \Poissonmad(\mu(B_i))$
		\For{$j=1$ to $N_i$}
			\State $X_{ij} \sim \mu(\cdot \cap B_i)/\mu(B_i)$
			\State $\Pi = \Pi \cup \{X_{ij}\}$
		\EndFor
	\EndFor
\end{algorithmic}
\end{algorithm}

The crucial property of the sets $\Pi$ produced by \Algo{alg:pp} is that the number of points $N(A_j)$ that fall in \emph{any} finite collection $\{A_j\}_{j=1}^m$ of disjoint sets are independent Poisson random variables. Clearly, the counts $N(B_i)$ for the partitioning sets of \Algo{alg:pp} are independent Poissons; it is not obvious that this is also true for other collections of disjoint sets. To show this we study the limiting behaviour of $N(B)$ by counting the points placed in $B_i \cap B$ and summing as \Algo{alg:pp} iterates over $\R^n$.
\begin{theorem}
\label{thm:pp}
Let $\Pi \subseteq \R^n$ be the subset realized by \Algo{alg:pp} with $\sigma$-finite nonatomic mean measure $\mu$ and $A_1, \ldots A_m \subseteq \R^n$ disjoint. $N(B) = \#(\Pi \cap B)$ for $B \subseteq \R^n$ satisfies 
\begin{enumerate}
\item $N(A_j) \sim \Poissonmad(\mu(A_j))$,
\item $N(A_j)$ are independent.
\end{enumerate}
\end{theorem}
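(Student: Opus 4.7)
The plan is to prove both claims together by decomposing each $A_j$ along the partition $\{B_i\}_{i=1}^{\infty}$ and exploiting the conditional independence built into Algorithm~\ref{alg:pp}. Write $N(A_j \cap B_i)$ for the count of algorithm points falling in $A_j \cap B_i$, so that $N(A_j) = \sum_{i=1}^{\infty} N(A_j \cap B_i)$. The algorithm samples $(N_i, X_{i1}, \ldots, X_{iN_i})$ independently across $i$, hence the entire array $\{N(A_j \cap B_i)\}_{i,j}$ is independent across the index $i$. Everything then reduces to analyzing a single partition block $B_i$ and gluing back together with Lemma~\ref{lem:po}.

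The workhorse step is a Poisson splitting lemma for a single $B_i$. Condition on $N_i = n$. Since $\mu$ is nonatomic, the $n$ i.i.d. points $X_{ij} \sim \mu(\cdot \cap B_i)/\mu(B_i)$ fall into the disjoint categories $A_1 \cap B_i, \ldots, A_m \cap B_i$ and the remainder $R_i = B_i \setminus \bigcup_j A_j$ independently, with probabilities $p_{ij} = \mu(A_j \cap B_i)/\mu(B_i)$ and $p_{i0} = \mu(R_i)/\mu(B_i)$ respectively. Hence the joint distribution of the category counts is multinomial$(n; p_{i0}, p_{i1}, \ldots, p_{im})$. Unconditioning over $N_i \sim \Poissonmad(\mu(B_i))$ and computing the joint probability generating function (or a direct expansion analogous to the one in the proof of Lemma~\ref{lem:po}) gives that $N(A_1 \cap B_i), \ldots, N(A_m \cap B_i)$ are independent with $N(A_j \cap B_i) \sim \Poissonmad(\mu(A_j \cap B_i))$. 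Combined with independence across $i$, the whole array $\{N(A_j \cap B_i)\}_{i,j}$ consists of independent Poisson variables with the stated rates.

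To conclude, fix $j$. By Lemma~\ref{lem:po} applied to the independent summands across $i$,
\begin{equation*}
N(A_j) = \sum_{i=1}^{\infty} N(A_j \cap B_i) \sim \Poissonmad\!\left(\sum_{i=1}^{\infty} \mu(A_j \cap B_i)\right) = \Poissonmad(\mu(A_j)),
\end{equation*}
where the last equality is countable additivity of $\mu$ on the disjoint pieces $A_j \cap B_i$. Independence of $N(A_1), \ldots, N(A_m)$ follows because each $N(A_j)$ is a measurable function of the disjoint sub-array $\{N(A_j \cap B_i)\}_{i=1}^{\infty}$, and functions of disjoint collections of independent variables are independent.

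I expect the main obstacle to be the Poisson splitting step at the level of a single $B_i$, since it is the only place where a real calculation is needed; the rest is bookkeeping. Handling it cleanly will either require a short generating-function computation or a two-category induction (splitting $B_i$ into $A_1 \cap B_i$ versus its complement in $B_i$, then peeling off $A_2, \ldots, A_m$ one at a time). A minor subtlety to watch is the $\mu(A_j) = \infty$ case, which is handled by the conventions $\Poissonmad(\infty)$ from the definition and does not interfere with the argument, and the nonatomicity hypothesis, which is used exactly once to ensure the points $X_{ij}$ are almost surely distinct so that counts $N(A_j \cap B_i)$ agree with the number of sampled $X_{ij}$'s in $A_j \cap B_i$.
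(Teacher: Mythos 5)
Your proposal is correct and follows essentially the same route as the paper: decompose each $A_j$ along the algorithm's partition $\{B_i\}$, use the conditional multinomial structure given $N_i$ (with nonatomicity ensuring counts match placed points) to show the block counts $N(A_j \cap B_i)$ are independent Poissons, and then sum across $i$ via Lemma~\ref{lem:po}. The only cosmetic difference is that you allow a generating-function shortcut where the paper carries out the direct summation over $k$, and you spell out the final independence-across-$j$ step slightly more explicitly than the paper does.
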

\begin{proof}
Adapted from \cite{kingman1992poisson}. Let $B_i$ be the partition of \Algo{alg:pp} with $\mu(B_i) > 0$ without loss of generality. With probability one,
\begin{align*}
N(A_j) = N(\cup_{i=1}^{\infty} B_i \cap A_j) = \sum_{i=1}^{\infty} N(B_i \cap A_j).
\end{align*}
Consider the array of $N(B_i \cap A_j)$ for $i \in \{1, 2, \ldots\}$ and $j \in \{1, \ldots, m\}$. The rows are clearly independent. Thus, by Lemma \ref{lem:po} it is enough to show
\begin{enumerate}
\item $N(B_i \cap A_j) \sim \Poissonmad(\mu(B_i \cap A_j))$,
\item $N(B_i \cap A_j)$ for $j \in \{1, \ldots, m\}$ are independent,
\end{enumerate}
Let $A_0$ be the complement of $\cup_{i=1}^m A_i$. Because $\mu$ is nonatomic, each point is distinct with probability one. Thus,
\begin{align*}
\proba(N(B_i \cap A_0) = k_0, \ldots, N(B_i \cap A_m) = k_m &| N_i = k) =\\
&\frac{k!}{k_0 ! \ldots k_m!} \prod_{j=0}^m \frac{\mu(B_i \cap A_j)^{k_j}}{\mu(B_i)^{k_j}}
\end{align*} 
with $k_0 = k - \sum_{j=1}^m k_j$. Now,
\begin{align*}
\proba(N(B_i \cap A_1) &= k_1, \ldots, N(B_i \cap A_m) = k_m) =\\
&\sum_{k = \sum_j k_j}^{\infty} \exp(-\mu(B_i)) \frac{\mu(B_i)^k}{k!} \frac{k!}{k_0 ! \ldots k_m!} \prod_{j=0}^m \frac{\mu(B_i \cap A_j)^{k_j}}{\mu(B_i)^{k_j}}\\
&\sum_{k_0 = 0}^{\infty} \prod_{j=0}^m \exp(-\mu(B_i \cap A_j)) \frac{\mu(B_i \cap A_j)^{k_j}}{k_j!}\\
&= \prod_{j=1}^m \exp(-\mu(B_i \cap A_j)) \frac{\mu(B_i \cap A_j)^{k_j}}{k_j!}.
\end{align*}
finishes the proof.
\end{proof}

Notice that the partition in \Algo{alg:pp} has an indistinguishable effect on the eventual counts $N(B)$. In fact there may be entirely different algorithms that realize random subsets indistinguishable from $\Pi$. This motivates the standard definition for deciding whether a random process is Poisson.
\begin{definition}[Poisson process] 
\label{def:pp}
Let $\mu$ be a $\sigma$-finite nonatomic measure on $\R^n$. A random countable subset $\Pi \subseteq \R^n$ is a Poisson process with mean measure $\mu$ if
\begin{enumerate}
\item For $B \subseteq \R^n$, $N(B) \sim \Poissonmad(\mu(B))$.
\item For $A_1, \ldots A_m \subseteq \R^n$ disjoint, $N(A_j)$ are independent.
\end{enumerate}
\end{definition}
\noindent \Algo{alg:pp} together with Theorem \ref{thm:pp} is an existence proof for Poisson processes. Poisson processes are generic models for procedures that place points completely randomly in space. In later sections we specialize them to model the sequence of points considered by Monte Carlo methods.

\subsection{Mapping and thinning a Poisson process}

We are ultimately interested in understanding how the operations of accept-reject and the Gumbel-Max trick modify distributions. They are special cases of more generic operations on the points $X \in \Pi$ of a Poisson process, which modify its measure. Accept-reject corresponds to the stochastic removal of points based on their location. The Gumbel-Max trick corresponds to the deterministic relocation of points. Here we study those operations in some generality. 

The stochastic removal of points $X \in \Pi$ is called thinning. To count the number of points that remain after thinning, we need their joint distribution before thinning. If we restrict our attention to one of the subsets $B_i$ of the partition in \Algo{alg:pp}, then the distribution is clear: conditioned on $N(B_i) = k$, each point is distributed identically and independently (i.i.d.) as $\mu$ restricted to $B_i$. This property turns out to be true for any subset $B \subseteq \R^n$ of finite measure.
\begin{lemma} 
\label{lem:bernoulli}
Let $\Pi \subseteq \R^n$ be a Poisson Process with $\sigma$-finite nonatomic mean measure $\mu$ and $B \subseteq \R^n$ with $0 < \mu(B) < \infty$. Given $N(B) = k$, each $X_i \in \Pi \cap B$ for $i \in \{1, \ldots k\}$ is i.i.d. as,
\begin{displaymath}
X_i \, | \, \{N(B) = k\} \sim \mu(\cdot \cap B)/\mu(B).
\end{displaymath}
\end{lemma}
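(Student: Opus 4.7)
My plan is to reduce the i.i.d.\ claim to an equality of multinomial count distributions on arbitrary finite Borel partitions of $B$, and then appeal to the fact that the joint distribution of a simple point process on $B$ is determined by these counts.

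First I would take any disjoint partition $B = A_1 \cup \cdots \cup A_m$ with $A_j \subseteq \R^n$, where we may absorb any leftover mass by taking the union to equal $B$. From \defi{def:pp} the counts $N(A_j)$ are independent $\Poissonmad(\mu(A_j))$ random variables. Since $N(B) = \sum_{j=1}^m N(A_j)$ (with probability one, because $\mu$ is nonatomic so points are distinct) and $N(B) \sim \Poissonmad(\mu(B))$ by \lem{lem:po}, I can compute the conditional distribution directly:
\begin{align*}
\proba(N(A_1) = k_1, \ldots, N(A_m) = k_m \mid N(B) = k)
&= \frac{\prod_{j=1}^m e^{-\mu(A_j)} \mu(A_j)^{k_j}/k_j!}{e^{-\mu(B)} \mu(B)^k/k!}\\
&= \frac{k!}{k_1! \cdots k_m!} \prod_{j=1}^m \left(\frac{\mu(A_j)}{\mu(B)}\right)^{k_j},
\end{align*}
valid whenever $k_1 + \cdots + k_m = k$. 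This is exactly the multinomial distribution one obtains from $k$ i.i.d.\ draws from the probability measure $\nu(\cdot) = \mu(\cdot \cap B)/\mu(B)$, classified into the bins $A_1, \ldots, A_m$.

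Next I would lift this equality of count distributions to an equality of joint distributions on $B^k$. Let $Y_1, \ldots, Y_k$ be i.i.d.\ samples from $\nu$ and let $M(A) = \#\{i : Y_i \in A\}$ be their counting random variable. The display above shows that the random measures $N(\cdot \cap B) \mid \{N(B) = k\}$ and $M(\cdot)$ have the same finite-dimensional distributions on all Borel partitions of $B$. By standard arguments (a Dynkin/$\pi$-$\lambda$ argument on the generating algebra of product sets of the form $\{N(A_1)=k_1, \ldots, N(A_m)=k_m\}$, see \cite{kingman1992poisson}) these random counting measures agree in distribution. Because $\mu$ is nonatomic, $M$ almost surely determines the unordered multiset $\{Y_1, \ldots, Y_k\}$ of distinct points, so the unordered point set $\Pi \cap B$ given $\{N(B) = k\}$ has the same distribution as $\{Y_1, \ldots, Y_k\}$, which is the i.i.d.\ claim.

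The main obstacle is the last step, where one must be careful that ``i.i.d.'' is really a statement about an exchangeable ordering of the unordered set $\Pi \cap B$: the equality of count distributions only tells us about the unlabeled points, and so the i.i.d.\ statement should be understood modulo an arbitrary (say uniformly random) labeling $i \mapsto X_i$ of the $k$ distinct points. Once this convention is fixed, symmetry of the multinomial in its coordinates makes the exchangeability and the i.i.d.\ product form immediate.
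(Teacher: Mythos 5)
Your proof is correct, but it reaches the conclusion by a different route than the paper. The paper's Appendix proof works directly with the labeled points: for arbitrary, possibly overlapping sets $B_1,\ldots,B_k \subseteq B$ it decomposes them into the $2^k$ atoms of the algebra they generate, sums over all assignments of the $k$ points to atoms, and converts each assignment probability into a ratio of Poisson count probabilities weighted by the arrangement factor $\prod_j n_j(s)!/k!$; this yields $\proba(\{X_i \in B_i\}_{i=1}^k \mid N(B)=k) = \prod_{i=1}^k \mu(B_i)/\mu(B)$ in one self-contained computation, which is literally the i.i.d.\ product form. You instead stop at the conditional count level --- showing that, given $N(B)=k$, the counts on any finite partition of $B$ are multinomial with cell probabilities $\mu(A_j)/\mu(B)$, hence match the empirical counts of $k$ i.i.d.\ draws from $\mu(\cdot \cap B)/\mu(B)$ --- and then invoke the standard uniqueness result that the law of a simple point process is determined by its finite-dimensional count distributions, together with nonatomicity to identify counting measures with unordered point sets. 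Your route is shorter and conceptually cleaner (conditional Poisson equals multinomial equals empirical measure of i.i.d.\ samples), but it leans on external point-process machinery (the Dynkin-type uniqueness argument) that the paper's elementary combinatorial computation avoids. Your closing remark about the i.i.d.\ statement only making sense for an exchangeable (e.g.\ uniformly random) labeling of the unordered set $\Pi \cap B$ is a genuine point of care: the paper's proof makes the same assumption implicitly in the step where it divides by the number of arrangements consistent with given counts, so making it explicit is an improvement rather than a discrepancy.
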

\begin{proof}
The proof is uninformative, so we leave it to the Appendix.
\end{proof}
\noindent Intuitively, this result ought to be true, because we could have realized $\Pi$ via \Algo{alg:pp} with $B$ as one of the partitioning sets. 

Now suppose we remove points $X \in \Pi$ independently with probability $1-\rho(X)$, where $\rho : \R^n \to [0, 1]$ is some integrable function. For $B$ with finite measure, given $N(B)$ the probability of keeping $X \in \Pi \cap B$ is
\begin{align}
\label{eq:keepprob}
\proba(\text{keep } X\, | \, N(B) = k) = \expectmad (\rho(X) \, | \, N(B) = k) = \int_{B} \frac{\rho(x)}{\mu(B)} \mu(dx).
\end{align}
By summing over the value of $N(B)$, we can derive the marginal distribution over the number of remaining points. This is the basic strategy of the Thinning Theorem.
\begin{theorem}[Thinning] 
\label{thm:thin}
Let $\Pi \subseteq \R^n$ be a Poisson Process with $\sigma$-finite nonatomic mean measure $\mu$ and $S(x) \sim \Bernoulli(\rho(x))$ an independent Bernoulli random variable for $x \in \R^n$ with integrable $\rho: \R^n \to [0,1]$, then
\begin{align}
\label{eq:thindef}
\thin(\Pi, S) = \{X : X \in \Pi \text{ and } S(X) = 1\}
\end{align}
is a Poisson process with mean measure
\begin{displaymath}
\mu^*(B) =  \int_{B} \rho(x) \mu(dx).
\end{displaymath}
\end{theorem}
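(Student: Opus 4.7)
The plan is to verify the two clauses of \defi{def:pp} directly for $\thin(\Pi, S)$, using Lemma \ref{lem:bernoulli} to describe the conditional distribution of the points of $\Pi$ inside a set, and then integrating out. First I would dispatch the housekeeping: since $\rho(x) \in [0,1]$ we have $\mu^*(B) \leq \mu(B)$, so any $\sigma$-finite partition witnessing $\sigma$-finiteness of $\mu$ also works for $\mu^*$, and $\mu^*(\{x\}) \leq \mu(\{x\}) = 0$ gives nonatomicity. So $\mu^*$ is a legitimate candidate mean measure.

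Next, for a fixed Borel set $B$ with $0 < \mu(B) < \infty$, I would compute the distribution of $N^*(B) = \#(\thin(\Pi,S) \cap B)$ by conditioning on $N(B) = n$. By Lemma \ref{lem:bernoulli} the $n$ points $X_1, \ldots, X_n$ inside $B$ are i.i.d. from $\mu(\cdot \cap B)/\mu(B)$, and independently each is retained with probability $\rho(X_i)$. Using \eq{eq:keepprob}, each point is retained with marginal probability $p = \mu^*(B)/\mu(B)$, so $N^*(B) \mid N(B) = n \sim \text{Binomial}(n, p)$. Summing
\begin{align*}
\proba(N^*(B) = k) = \sum_{n=k}^{\infty} \binom{n}{k} p^k (1-p)^{n-k} \, e^{-\mu(B)} \frac{\mu(B)^n}{n!}
\end{align*}
yields $e^{-\mu^*(B)} \mu^*(B)^k/k!$ after pulling out the factor $(p\mu(B))^k/k!$ and recognizing the remaining series as $e^{(1-p)\mu(B)}$. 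This gives $N^*(B) \sim \Poissonmad(\mu^*(B))$ for any $B$ of finite $\mu$-measure; the case $\mu^*(B) = \infty$ follows by exhausting $B$ with a $\sigma$-finite partition and invoking Lemma \ref{lem:po} on the thinned counts on each piece.

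For independence on disjoint $A_1, \ldots, A_m$ of finite measure, I would argue that the joint distribution of $(N^*(A_j))_{j=1}^m$ factors by conditioning on $(N(A_1), \ldots, N(A_m))$. These counts are independent for $\Pi$ by Definition \ref{def:pp}; given them, the points within the different $A_j$ are drawn from independent product distributions over their respective sets, and the Bernoulli retention variables $S(x)$ are independent across points by assumption. Hence the retained counts $N^*(A_j)$ are conditionally independent with the marginals computed above, and integrating against the independent joint law of $(N(A_j))_j$ preserves independence. To handle disjoint sets of possibly infinite measure, I would intersect each $A_j$ with the $\sigma$-finite partition $\{B_i\}$, apply the above to the grid $A_j \cap B_i$, and use independence-preservation under countable additivity as in Theorem \ref{thm:pp}.

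The main obstacle will be the conditioning step: one needs to know that, given the counts $N(A_j)$, the points in $\Pi \cap A_j$ are independent across $j$ and i.i.d. within each $A_j$ according to $\mu(\cdot \cap A_j)/\mu(A_j)$. Lemma \ref{lem:bernoulli} gives this within a single $B$, but to get the joint statement across disjoint $A_j$ I would apply the lemma to the product structure obtained by treating the $A_j$ as part of a refinement of the partition in \algo{alg:pp}; this is essentially the same bookkeeping used in the proof of Theorem \ref{thm:pp}, so no new ideas are needed beyond combining those results with the independent Bernoulli selection.
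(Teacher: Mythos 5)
Your proposal takes essentially the same route as the paper's proof: condition on $N(B)$ for a set of finite measure, use \eq{eq:keepprob} to see the thinned count is Binomial, sum the series to recognize $\Poissonmad(\mu^*(B))$, and handle $\mu(B)=\infty$ by partitioning into finite-measure pieces and invoking Lemma \ref{lem:po}. The only difference is that you spell out the independence clause and the $\sigma$-finiteness/nonatomicity of $\mu^*$, which the paper dismisses as clear; your sketch of those points is sound.
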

\begin{proof}
Originally from \cite{thinning}. Let $B \subseteq \R^n$. Define,
\begin{align*}
N^*(B) = \#(\thin(\Pi, S) \cap B) 
\end{align*}
$N^*(B)$ clearly satisfies the independence property and the result is trivial for $\mu(B) = 0$. For $0 < \mu(B) < \infty$,
\begin{align*}
\proba(N^*(B) = k) &= \sum_{j=k}^{\infty} \proba(N(B) = j)\proba(k \text{ of } S(X_i) = 1 | N(B) = j).
\intertext{Let $\bar{\mu}^*(B) = \mu(B) - \mu^*(B)$. By (\ref{eq:keepprob}),}
&= \sum_{j=k}^{\infty} \exp(-\mu(B)) \frac{\mu(B)^j}{j!} {j \choose k} \frac{\mu^*(B)^k}{\mu(B)^k} \frac{\bar{\mu}^*(B)^{j-k}}{\mu(B)^{j-k}}\\
&=  \exp(-\mu^*(B))\frac{\mu^*(B)^k}{k!} \sum_{j=k}^{\infty} \exp(-\bar{\mu}^*(B)) \frac{\bar{\mu}^*(B)^{j-k}}{(j-k)!} \\
&=  \exp(-\mu^*(B))\frac{\mu^*(B)^k}{k!}.
\end{align*}
For $\mu(B) = \infty$, partition $B$ into subsets with finite measure. The countable additivity of integrals of nonnegative functions and of Poisson random variables (Lemma \ref{lem:po}) finishes the proof.
\end{proof}

A measurable function $h : \R^n \to \R^n$ that relocates points $X \in \Pi$ is easy to analyze if it is 1-1, because it will not relocate two distinct points to the same place. The key insight is that we can count the points relocated to $B \subseteq \R^n$ by counting in the preimage $h^{-1}(B)$; the so-called Mapping Theorem.
\begin{theorem}[Mapping] 
\label{thm:map}
Let $\Pi \subseteq \R^n$ be a Poisson process with $\sigma$-finite nonatomic mean measure $\mu$ and $h : \R^n \rightarrow \R^n$ a measurable 1-1 function, then 
\begin{align*}
h(\Pi) = \{h(X) : X \in \Pi\}
\end{align*}
is a Poisson process with mean measure
\begin{displaymath}
\mu^*(B) = \mu(h^{-1}(B))
\end{displaymath}
\end{theorem}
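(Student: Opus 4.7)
The plan is to verify the two defining properties from Definition~\ref{def:pp} for the image set $h(\Pi)$, with mean measure $\mu^*(B) = \mu(h^{-1}(B))$. The whole argument reduces to a single bookkeeping observation: because $h$ is 1-1, no two distinct points of $\Pi$ collide under $h$, so counts in the image correspond exactly to counts in the preimage.

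First I would establish that $\mu^*$ is a legitimate mean measure for a Poisson process, i.e.\ that it is $\sigma$-finite and nonatomic. Measurability of $h$ ensures $h^{-1}(B)$ is a Borel set whenever $B$ is, so $\mu^*$ is well-defined and inherits countable additivity from $\mu$. If $\{B_i\}$ is a partition with $\mu(B_i) < \infty$, then $\{h(\Pi)$-counting can be controlled by the partition $\{h(B_i) \cup \text{leftover}\}$, but more directly: choose a $\sigma$-finite partition $\{C_i\}$ of $\R^n$ for $\mu^*$ by taking preimages under $h$ of the relevant Borel cover; this works because $\mu^*(\R^n) = \mu(h^{-1}(\R^n)) \leq \mu(\R^n)$ and $\mu$ is $\sigma$-finite. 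Nonatomicity is where $h$ being 1-1 matters: for any singleton $\{y\}$, $h^{-1}(\{y\})$ is either empty or a single point, so $\mu^*(\{y\}) = \mu(h^{-1}(\{y\})) = 0$.

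Next I would compute the count function. Define $N^*(B) = \#(h(\Pi) \cap B)$. The core identity is
\begin{align*}
N^*(B) = \#\{h(X) : X \in \Pi,\ h(X) \in B\} = \#\{X \in \Pi : X \in h^{-1}(B)\} = N(h^{-1}(B)),
\end{align*}
where the middle equality uses that $h$ is 1-1 so the map $X \mapsto h(X)$ from $\Pi \cap h^{-1}(B)$ to $h(\Pi) \cap B$ is a bijection. Applying the first property of $\Pi$ from Definition~\ref{def:pp} to $h^{-1}(B)$ gives $N^*(B) \sim \Poissonmad(\mu(h^{-1}(B))) = \Poissonmad(\mu^*(B))$, which is property~1.

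For property~2, if $A_1, \ldots, A_m \subseteq \R^n$ are disjoint, then $h^{-1}(A_1), \ldots, h^{-1}(A_m)$ are also disjoint (preimages always preserve disjointness, with no assumption on $h$ needed here). The second property of $\Pi$ then gives independence of $N(h^{-1}(A_j))$, and since $N^*(A_j) = N(h^{-1}(A_j))$, the $N^*(A_j)$ are independent. There is no real obstacle in this proof; the only point that requires care is the nonatomicity of $\mu^*$, which is precisely where injectivity of $h$ is used — if $h$ were not 1-1, a singleton could have an uncountable preimage and two points of $\Pi$ could collide in $h(\Pi)$, breaking the identification of counts above.
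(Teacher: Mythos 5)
Your proof is correct and follows essentially the same route as the paper: the key identity $N^*(B) = \#(h(\Pi)\cap B) = N(h^{-1}(B))$ via injectivity, the Poisson marginal by applying property 1 of $\Pi$ to $h^{-1}(B)$, and independence because preimages preserve disjointness; your explicit check that $\mu^*$ is nonatomic (preimages of singletons under a 1-1 map are empty or singletons) is a welcome addition that the paper leaves implicit. One small repair is needed in your $\sigma$-finiteness step: preimages under $h$ are subsets of the domain, so "taking preimages of the relevant Borel cover" cannot produce the required partition of the codomain, and comparing $\mu^*(\R^n)$ with $\mu(\R^n)$ says nothing about $\sigma$-finiteness (the total mass may be infinite in both cases). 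The correct construction, which is what the paper means by "1-1 functions map partitions of the domain to partitions of the range," is to take the images $h(B_i)$ of a partition $\{B_i\}$ witnessing $\sigma$-finiteness of $\mu$, together with the complement of the range $h(\R^n)$; then $\mu^*(h(B_i)) = \mu(h^{-1}(h(B_i))) = \mu(B_i) < \infty$ by injectivity, and the leftover set has empty preimage, hence $\mu^*$-measure zero. This is a minor bookkeeping fix; the probabilistic core of your argument is exactly the paper's.
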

\begin{proof} Adapted from \cite{kingman1992poisson}. $h$ is 1-1, therefore
\begin{displaymath}
\# (\{h(X) : X \in \Pi\} \cap B) = \# \{X \in \Pi: X \in h^{-1}(B)\} \sim \Poissonmad(\mu(h^{-1}(B))).
\end{displaymath}
Pre-images preserve disjointness, so the independence property is guaranteed. 1-1 functions map partitions of the domain to partitions of the range, so $\mu^*$ is still $\sigma$-finite.
\end{proof}

\section{Exponential races}
\label{sec:er}

\subsection{Definition and first arrivals distribution} 
In this section we specialize the Poisson process to model the sequence of points considered by accept-reject and the Gumbel-Max trick. We call the model an exponential race as a reference to a classical example. An exponential race (occasionally race for short) is a Poisson process in $\R^+ \times \R^n$, which we interpret as points in $\R^n$ ordered by an arrival time in the positive reals $\R^+$. The ordered points of an exponential race have a particularly simple distribution; the location in $\R^n$ of each point is i.i.d. according to some arrival distribution and the rate at which points arrive in time depends stochastically on the normalization constant of that arrival distribution. The Thinning and Mapping Theorems of Poisson processes have corresponding lemmas for exponential races, which describe operations that modify the arrival distribution of an exponential race. The ultimate value of this model is that a variety of apparently disparate Monte Carlo methods can be interpreted as procedures that simulate an exponential race. In Section \ref{sec:alg} we present Monte Carlo methods which produce samples from intractable distributions by operating on the simulation of an exponential race with a tractable distribution.
In this section we define an exponential race for an arbitrary finite nonzero measure $P$, discuss strategies for simulating exponential races when $P$ is tractable, and derive two operations that modify the arrival distribution of exponential races.

\begin{figure}[t]
\begin{center}
\includegraphics{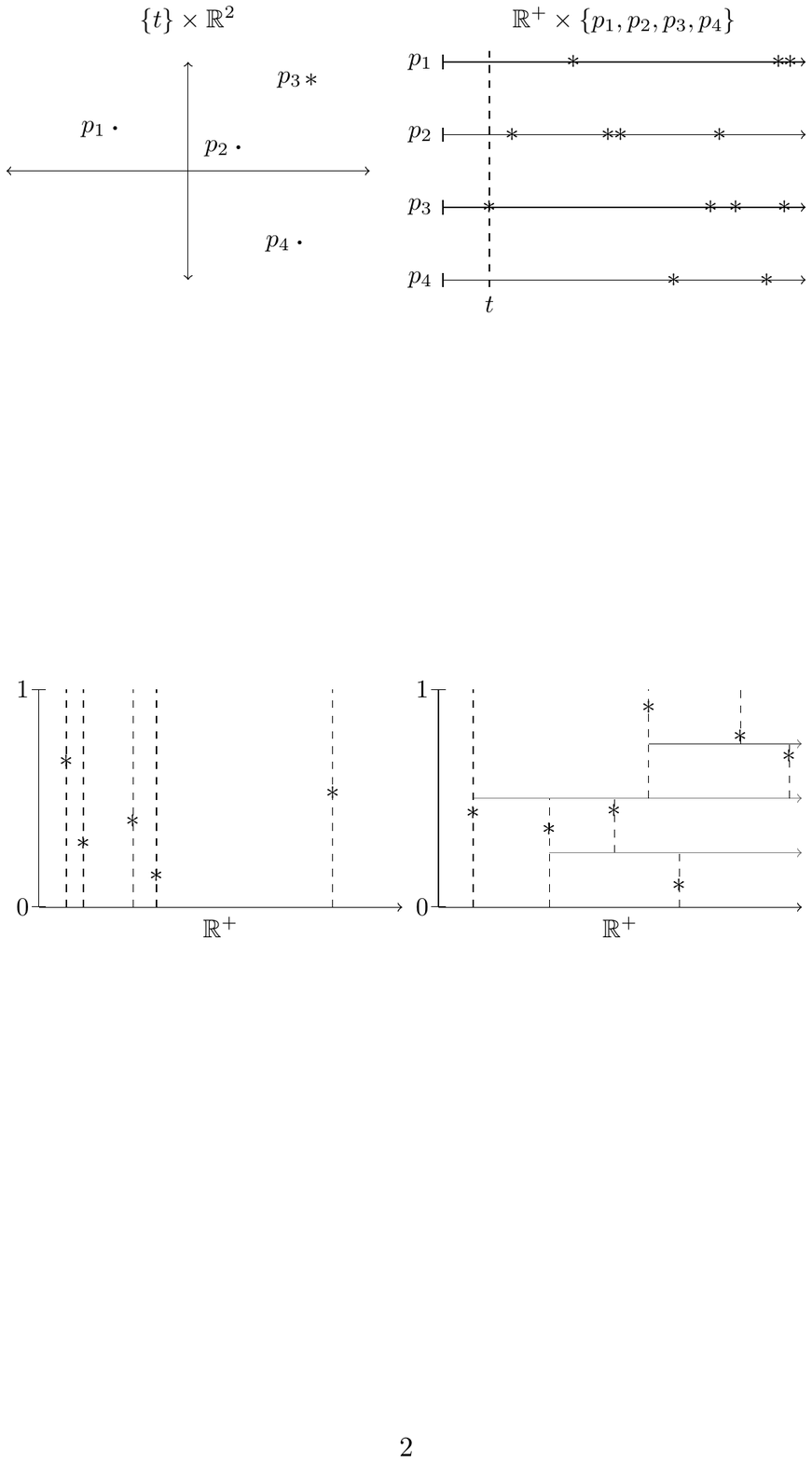}
\caption{The realization of an exponential race with points arriving at $p_j \in \R^2$. The left hand plot shows the location of arrivals in the plane $\R^2$ and the first arrival at time $t$ at $p_3$. The right hand plot shows future arrival times at the four points.}
\label{fig:er}
\end{center}
\end{figure}

For motivation we review the traditional exponential race example (see \citeauthor{durrett2012essentials}, \citeyear{durrett2012essentials}). Imagine instantaneous flashes of light arriving in time at $m$ distinct points $p_j$ scattered in $\R^2$. Suppose the arrival times of the flashes at each $p_j$ are determined by independent Poisson processes $\Pi_j \subseteq \R^+$ with mean measure $\lambda_j((0, t]) = \lambda_j t$  and $\lambda_j > 0$, see \Fig{fig:er}. The question is which point will get the first flash of light and how long do we need to wait? 
The first arrival at $p_j$ is after time $t$ iff $\Pi_j \cap (0, t]$ is empty,
\begin{align}
\label{eq:arrival}
\proba(T_j > t) = \proba(\# (\Pi_j \cap (0, t]) = 0) = \exp(-\lambda_j t).
\end{align}
(\ref{eq:arrival}) is the complementary cumulative distribution function of an exponential random variable, which we briefly review.
\begin{definition}[Exponential random variable]
$E$ is an exponential random variable distributed on positive $t \in \R^+$ with nonnegative rate $\lambda \in \R^{\geq 0}$ if
\begin{align}
\proba(E > t) = \exp(-\lambda t).
\end{align}
This is denoted $E \sim \Exponential(\lambda)$ and $E \sim \Exponential(0)$ is the random variable whose value is $\infty$ with probability one. If $E \sim \Exponential(1)$, then $E/\lambda \sim \Exponential(\lambda)$.
\end{definition}
\noindent Thus, the location and time of the first arrival is determined by the minimum of $m$ exponential random variables. For exponential random variables this is particularly easy to analyze; the minimum is an exponential random variable with rate $\sum_{j=1}^m \lambda_j$ and it is achieved at the $j$th variable with probability proportional to the rate $\lambda_j$. Surprisingly, these values are independent. 
\begin{lemma}
\label{lem:exp}
Let $E_j \sim \Exponential(\lambda_j)$ independent with nonegative $\lambda_j \in \R^{\geq 0}$. If
\begin{displaymath}
E^* = \min_{1 \leq j \leq m} E_j \text{ and } J^* = \argmin_{1 \leq j \leq m} E_j,
\end{displaymath}
and at least one $\lambda_j >0$ then
\begin{enumerate}
\item The density of $E_j$ with $\lambda_j > 0$ is $\lambda_j \exp(-\lambda_j t )$ for $t \in \R^+$,
\item $E^* \sim \Exponential(\sum\nolimits_{j=1}^m \lambda_j)$,
\item $\proba(J^* =k) \propto \lambda_k$,
\item $E^*$ is independent of $J^*$.
\end{enumerate}
\end{lemma}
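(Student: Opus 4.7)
The plan is to establish (1) by direct differentiation and then prove (2)--(4) simultaneously by computing the joint ``survival'' function $\proba(E^* > t, J^* = k)$ and showing it factorizes. Before starting, I would restrict attention to the indices $k$ with $\lambda_k > 0$: if $\lambda_j = 0$ then $E_j = \infty$ almost surely, so such $j$ never achieves the minimum (given the hypothesis that some $\lambda_j > 0$) and they contribute nothing to $\sum_j \lambda_j$ in (2)--(3), so they can be dropped without loss of generality.

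For (1), the CCDF $\proba(E_j > t) = \exp(-\lambda_j t)$ is continuously differentiable on $\R^+$ when $\lambda_j > 0$, so the density is $-\tfrac{d}{dt}\exp(-\lambda_j t) = \lambda_j \exp(-\lambda_j t)$. For (2)--(4), I would compute
\begin{align*}
\proba(E^* > t, J^* = k) &= \proba(E_k > t \text{ and } E_k < E_j \text{ for all } j \neq k)\\
&= \int_t^\infty \lambda_k \exp(-\lambda_k s) \prod_{j \neq k} \proba(E_j > s)\, ds\\
&= \int_t^\infty \lambda_k \exp\!\left(-\Lambda s\right) ds\\
&= \frac{\lambda_k}{\Lambda} \exp(-\Lambda t),
\end{align*}
where $\Lambda = \sum_{j=1}^m \lambda_j$ and the first equality uses that ties occur with probability zero (the density of any pair $(E_j, E_k)$ is continuous on $(\R^+)^2$, so $\proba(E_j = E_k) = 0$). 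Setting $t = 0$ and summing over $k$ gives $\proba(J^* = k) = \lambda_k / \Lambda$, which is (3); summing over $k$ for arbitrary $t$ gives $\proba(E^* > t) = \exp(-\Lambda t)$, which is (2). The factorization
\begin{align*}
\proba(E^* > t, J^* = k) = \proba(E^* > t)\, \proba(J^* = k)
\end{align*}
establishes the independence claim (4).

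There are no serious obstacles; the only delicate point is justifying that the argmin is almost surely unique, which is handled by the continuous-density argument above. As a sanity check that does not need to appear in the final proof, one could also derive this result by the Superposition/Mapping machinery from Section \ref{sec:pp}: each $E_j$ is the first arrival time of a marked Poisson process on $\R^+$ with rate $\lambda_j$, the superposition is Poisson with rate $\Lambda$ by Theorem \ref{thm:pp}, and the identity of the first arrival's mark is independent of its time by the same theorem. I would prefer the direct integration above for its brevity and self-containment.
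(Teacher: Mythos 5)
Your proposal is correct and follows essentially the same route as the paper's proof: both compute the joint survival probability $\proba(J^* = k,\, E^* > t)$ via the same integral $\int_t^\infty \lambda_k \exp(-\lambda_k s)\prod_{j\neq k}\exp(-\lambda_j s)\,ds$ (after noting ties have probability zero) and read off items (2)--(4) from the resulting factorization, with (1) obtained by differentiating the CCDF. Your explicit dismissal of indices with $\lambda_j = 0$ and the closing remark about superposition are harmless additions, not a different method.
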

\begin{proof}
\begin{enumerate}
\item The derivative of $1 - \exp(-\lambda_j t)$ is $\lambda_j \exp(-\lambda_j t )$.
\end{enumerate}
\noindent 2., 3., 4.  Note that with probability 1 the $E_j$ will be distinct, so
\begin{align*}
\proba(J^* =k, E^* > t) &= \proba(\cap_{j \neq k}\{E_j > E_k > t\})\\
&= \int_{t}^{\infty} \lambda_k \exp(-\lambda_k x) \prod\nolimits_{j \neq k} \exp(-\lambda_j x) \, dx\\
&= \frac{\lambda_k}{\sum\nolimits_{j=1}^m \lambda_j} \int_{t}^{\infty} (\sum\nolimits_{j=1}^m \lambda_j) \exp(-\sum\nolimits_{j=1}^m \lambda_j x) \, dx\\
&= \frac{\lambda_k}{\sum\nolimits_{j=1}^m \lambda_j} \exp(-\sum\nolimits_{j=1}^m \lambda_j t).
\end{align*}
This finishes the lemma.
\end{proof}

The extension of exponential races to arbitrary distributions on $\R^n$ is straightforward. The $m$ Poisson processes of the example are together a single Poisson process on $\R^+ \times \R^n$ with mean measure $(\lambda \times P)((0, t] \times B) = \sum_{j=1}^m t \lambda_j 1_{B}(p_j)$. $\lambda \times P$ is the product measure on $\R^+ \times \R^n$, where each is respectively equipped with $\lambda((0, t]) = t$ and $P(B) =  \sum_j \lambda_j 1_{B}(p_j)$. Extending this idea to an arbitrary finite measure $P$ (not just the discrete measures) is the key idea behind exponential races. Notice that $P$ in our example is atomic, which is fine, because the product measure $\lambda \times P$ is not atomic. On the other hand, we want the points arriving in $\R^n$ to correspond to the probability distribution $P(\cdot)/P(\R^n)$, so we will require that $P$ is finite, $P(\R^n) < \infty$, and nonzero, $0 < P(\R^n)$. Also, in contrast to Poisson processes, exponential races have a natural ordering in time.

\begin{definition}[Exponential race] Let $P$ be a finite nonzero measure on $\R^n$. A random countable subset $R \subseteq \R^+ \times \R^n$ is an exponential race with measure $P$ if the following hold
\begin{enumerate}
\item $R$ is a Poisson process with mean measure $\lambda \times P$.
\item $R$ is totally ordered by time, the first coordinate.
\end{enumerate}
If $R = \{(T_i, X_i)\}_{i=1}^{\infty}$, then we assume the enumeration corresponds to the ordering so that $i < j$ implies $T_i < T_j$.
\end{definition}

We can realize an exponential race with a slight modification of \Algo{alg:pp}; use the partition of rectangles $B_i = (i-1, i] \times \R^n$, and sort points by their time variable.

This is not the most direct characterization, so instead we derive the joint distribution of the first $m$ ordered points in Theorem \ref{thm:er}. The distribution of the countably infinite set $R$ is completely described by the joint distribution of the first $m$ points for all finite $m$. The proof of Theorem \ref{thm:er} shows that the locations $X_i$ are independently distributed as $P(\cdot)/P(\R^n)$ and the interarrival times $T_i - T_{i-1}$ are independent and exponentially distributed with rate $P(\R^n)$. This theorem is the cornerstone of this chapter, because it suggest a strategy for proving the correctness of Monte Carlo methods; if we can prove that the output of an algorithm $(T, X)$ is the first arrival of an exponential race with measure $P$, then Theorem \ref{thm:er} guarantees that the location $X$ is a sample from $P(\cdot)/P(\R^n)$.

\begin{theorem} 
\label{thm:er}
Let $P$ be a finite nonzero measure on $\R^n$, $X_i \sim P(\cdot)/P(\R^n)$ independent, and  $E_i \sim \Exponential(P(\R^n))$ independent, then first $m$ points $\{(T_i, X_i)\}_{i=1}^m$ of any exponential race $R \subseteq \R^+ \times \R^n$ with measure $P$ have the same joint distribution as
\begin{align*}
\{(\sum\nolimits_{j=1}^i E_j,  X_i)\}_{i=1}^m.
\end{align*}

\end{theorem}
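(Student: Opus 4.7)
The plan is to read off the joint distribution of the first $m$ arrivals $\{(T_i, X_i)\}_{i=1}^m$ directly from the Poisson process structure of $R$ on $\R^+ \times \R^n$ with mean measure $\lambda \times P$, where $\lambda$ is Lebesgue measure, and then match it to the distribution of $\{(\sum_{j=1}^i E_j, X_i)\}_{i=1}^m$. The product measure $\lambda \times P$ is $\sigma$-finite and nonatomic (the latter because $\lambda$ is nonatomic, regardless of whether $P$ has atoms), so the Poisson process results of Section \ref{sec:pp} apply to $R$. Write $\Lambda = P(\R^n) > 0$.

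Fix $0 < t_1 < \cdots < t_m$, Borel sets $A_1, \ldots, A_m \subseteq \R^n$, and $\epsilon > 0$ small enough that the closed intervals $[t_i, t_i+\epsilon]$ are pairwise disjoint and each lies inside the gap $(t_{i-1}, t_i]$ with $t_0 := 0$. I would consider the event $E_\epsilon$ that simultaneously demands: $N((0, t_1) \times \R^n) = 0$; $N((t_{i-1} + \epsilon, t_i) \times \R^n) = 0$ for $2 \leq i \leq m$; and $N([t_i, t_i+\epsilon] \times A_i) = 1$ together with $N([t_i, t_i+\epsilon] \times A_i^c) = 0$ for each $1 \leq i \leq m$. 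On $E_\epsilon$ the first $m$ arrivals of $R$ satisfy $T_i \in [t_i, t_i+\epsilon]$ and $X_i \in A_i$. All rectangles listed are pairwise disjoint in $\R^+ \times \R^n$, so Definition \ref{def:pp} factors the probability into a product of Poisson counts; using $P(A_i) + P(A_i^c) = \Lambda$ to merge the two $[t_i, t_i+\epsilon]$ factors, the product collapses to
\begin{displaymath}
\proba(E_\epsilon) = \epsilon^m \exp(-\Lambda(t_m + \epsilon)) \prod_{i=1}^m P(A_i).
\end{displaymath}
Dividing by $\epsilon^m$ and sending $\epsilon \downarrow 0$ identifies the joint law of $(T_1, X_1, \ldots, T_m, X_m)$, restricted to $\{0 < t_1 < \cdots < t_m\}$, as having density $\exp(-\Lambda t_m)$ with respect to $\lambda^{\otimes m} \otimes P^{\otimes m}$.

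For the comparison, apply the unit-Jacobian change of variables $s_i = \sum_{j=1}^i E_j$ to i.i.d.\ $\Exponential(\Lambda)$ variables $E_j$, whose joint density is $\Lambda^m \exp(-\Lambda \sum_j E_j)$. The image is the ordered region $\{0 < s_1 < \cdots < s_m\}$, on which the partial sums have joint density $\Lambda^m \exp(-\Lambda s_m)$; combined with the independent locations $X_i \sim P/\Lambda$ (whose joint law has density $\Lambda^{-m}$ with respect to $P^{\otimes m}$), the joint law of the right-hand side has density $\exp(-\Lambda s_m)$ with respect to $\lambda^{\otimes m} \otimes P^{\otimes m}$, matching the exponential-race expression.

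The main obstacle is the careful bookkeeping of the small-box step: checking the disjointness needed to invoke Poisson independence, verifying that the two factors $[\epsilon P(A_i) \exp(-\epsilon P(A_i))]\cdot \exp(-\epsilon P(A_i^c))$ really do combine into $\epsilon P(A_i) \exp(-\epsilon \Lambda)$, and extending the rectangle-level small-box identity to a statement about the full joint distribution on product Borel sets via a standard $\pi$-$\lambda$ or monotone class argument. An elegant alternative that avoids the $\epsilon \downarrow 0$ limit is induction on $m$: the base case follows from $\proba(T_1 > t) = \proba(N((0,t]\times \R^n) = 0) = \exp(-\Lambda t)$ combined with Lemma \ref{lem:bernoulli} applied to the slab $(0, t] \times \R^n$, and the induction step rests on the strong-Markov-type property that $\{(T - T_1, X) : (T, X) \in R, T > T_1\}$ is itself an exponential race with measure $P$, independent of $(T_1, X_1)$.
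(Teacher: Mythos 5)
Your main argument takes a genuinely different route from the paper. The paper works with the first-arrival-time functionals $T(t,B)$ of (\ref{eq:arrivaltime}): it shows $T(t,B)-t \sim \Exponential(P(B))$, invokes Lemma \ref{lem:exp} to get the base case $\proba(T_1>s, X_1\in B)=\exp(-sP(\R^n))P(B)/P(\R^n)$, and then inducts, using the fact that the history $\{T_i=t_i,X_i=x_i\}_{i=1}^k$ is a function of counts on $(0,t_k]\times\R^n$ and hence independent of the counts after $t_k$ that determine the next interarrival. You instead compute the joint law of all $m$ arrivals at once: disjoint Poisson boxes give $\proba(E_\epsilon)=\epsilon^m e^{-\Lambda(t_m+\epsilon)}\prod_i P(A_i)$ (your bookkeeping here checks out), yielding the density $\exp(-\Lambda t_m)$ on the ordered region with respect to $\lambda^{\otimes m}\otimes P^{\otimes m}$, which you correctly match against the density of the partial sums of i.i.d.\ $\Exponential(\Lambda)$ variables paired with i.i.d.\ locations. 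Your route buys an explicit closed-form joint density (useful elsewhere, e.g.\ for the order-statistics view of the race), whereas the paper's induction buys a proof in which the only distributional inputs are the elementary Lemma \ref{lem:exp} and Poisson independence, with no limiting or measure-identification step at all.

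The one place your write-up is soft is the sentence ``dividing by $\epsilon^m$ and sending $\epsilon\downarrow 0$ identifies the joint law.'' A pointwise limit of small-box probabilities over $\epsilon^m$ does not by itself determine the law; you need either (a) an exact computation of $\proba(T_i\in(a_i,b_i],\,X_i\in A_i,\ i\le m)$ on a generating $\pi$-system (doable, e.g.\ by conditioning on counts in the slabs and using Lemma \ref{lem:bernoulli}, or by integrating the box identity with an $O(\epsilon^{m+1})$ error control, since your event $E_\epsilon$ is a strict subset of the target event, differing by extra points in $[t_m,t_m+\epsilon]$), or (b) your fallback induction. Note that the fallback \emph{is} the paper's proof, and the ``strong-Markov-type'' restart property you invoke there is not free: it is exactly the conditional-independence step the paper supplies, namely that the event $\{T(t_k,B^c)>T(t_k,B)>s+t_k\}$ depends only on counts after time $t_k$ and is therefore independent of the realized history. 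So either finish the density identification honestly or prove the restart property; as written, each of your two routes delegates its key technical step to the other.
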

\begin{proof} Let $T(t, B)$ be the time of the first arrival in $B$ after time $t \geq 0$,
\begin{align}
\label{eq:arrivaltime}
T(t, B) = \min \{T_i : (T_i, X_i) \in R \cap (t, \infty) \times B\}.
\end{align}
$R \cap ((t, s + t] \times B)$ is finite with probability one for all $s > 0$, so (\ref{eq:arrivaltime}) is well defined. $T(t, B) - t$ is an exponential random variable, because
\begin{align*}
\proba(T(t, B) - t > s) = \proba(N((t, s+t] \times B) = 0) = \exp(- P(B)s).
\end{align*}
$T(t, B)$ and $T(t, B^c)$ are independent, by Poisson process independence.

We proceed by induction. The event $\{T_1 > s, X_1 \in B\}$ is equivalent to $\{T(0, B^c) > T(0, B) > s\}$. $P(B) > 0$ or $P(B^c) > 0$, so by Lemma \ref{lem:exp},
\begin{align*}
\proba(T_1 > s, X_1 \! \in \! B) = \proba(T(0, B^c) > T(0, B) > s) = \exp(-s P(\R^n))\frac{ P(B)}{P(\R^n)}.
\end{align*}
Now, assume Theorem \ref{thm:er} holds for $k$. The event 
\begin{align*}
\{T_i = t_i, X_i = x_i\}_{i=1}^k
\end{align*}
is completely described by counts in $(0, t_k] \times \R^n$ and thus independent of 
\begin{align*}
\{T(t_k, B^c) > T(t_k, B) > s + t_k\}
\end{align*}
Thus
\begin{align*}
\proba(T_{k+1} - &T_k > s, X_{k+1} \! \in \! B | \{T_i = t_i, X_i = x_i\}_{i=1}^k) \\
&=\proba(T(t_k, B^c) > T(t_k, B) > s + t_k | \{T_i = t_i, X_i = x_i\}_{i=1}^k) \\
&=\proba(T(t_k, B^c) > T(t_k, B) > s + t_k) \\
&=  \exp(-s P(\R^n))\frac{ P(B)}{P(\R^n)}
\end{align*}
concludes the proof.\end{proof}

\subsection{Simulating an exponential race with a tractable measure}
\label{subsec:ersim}

\begin{algorithm}[b]
\caption{An exponential race $R$ with finite nonzero measure $Q$} \label{alg:er}
\begin{algorithmic}
	\State $R = \emptyset$
	\State $T_0 = 0$
	\For{$i=1$ to $\infty$}
		\State $E_i \sim \Exponential(Q(\R^n))$
		\State $X_{i} \sim Q(\cdot)/Q(\R^n)$
		\State $T_i = T_{i-1} + E_i$
		\State $R = R \cup \{(T_i, X_i)\}$
	\EndFor
\end{algorithmic}
\end{algorithm}

If $Q$ is a tractable finite nonzero measure on $\R^n$, that is we have a procedure for computing $Q(\R^n)$ and sampling from $Q(\cdot)/Q(\R^n)$, then Theorem \ref{thm:er} suggests \Algo{alg:er} for simulating an exponential race $R$ with measure $Q$.  \Algo{alg:er} simulates the points of an exponential race in order of arrival time. It does not terminate, but we can think of it as a coroutine or generator, which maintains state and returns the next arrival in $\R$ each time it is invoked. As a simple example consider the uniform measure $Q((a, b]) = b-a$ on $[0,1]$. \Algo{alg:er} for this $Q$ simulates a sequence of arrivals $\{(T_i, X_i)\}_{i=1}^{\infty}$ with arrival location $X_i \sim \Uniform[0,1]$ and interarrival time $T_{i+1} - T_{i} \sim \Exponential(1)$, see the left hand plot of Figure \ref{fig:eralt}.

As with the initial discrete example, in which we constructed an exponential race from $m$ independent Poisson processes, this is not the only approach. More generally, if $\{B_i\}_{i=1}^m$ is any finite partition of $\R^n$ such that $Q(\cdot \cap B_i)$ is tractable, then we can simulate $R$ by simulating $m$ independent exponential races $R_i$ with measure $Q(\cdot \cap B_i)/Q(B_i)$ via \Algo{alg:er} and sorting the result $\cup_{i=1}^m R_i$. This can be accomplished lazily and efficiently with a priority queue data type, which prioritizes the races $R_i$ according to which arrives next in time.
It also possible to split the races $R_i$ online by partitioning $B_i$ and respecting the constraint imposed by the arrivals already generated in $B_i$. We highlight a particularly important variant, which features in A* sampling in Section \ref{sec:alg}. Consider an infinitely deep tree in which each node is associated with a subset $B \subseteq \R^n$. If the root is $\R^n$ and the children of each node form a partition of the parent, then we call this a space partitioning tree. We can realize an exponential race over a space partitioning tree by recursively generating arrivals $(T, X)$ at each node $B$. Each location $X$ is sampled independently from $Q(\cdot \cap B)/Q(B)$, and each time $T$ is sampled by adding an independent $\Exponential(Q(B))$ to the parent's arrival time. The arrivals sorted by time over the realization of the tree form a exponential race. See \Fig{fig:eralt}.

\begin{figure}[t]
\begin{center}
\includegraphics{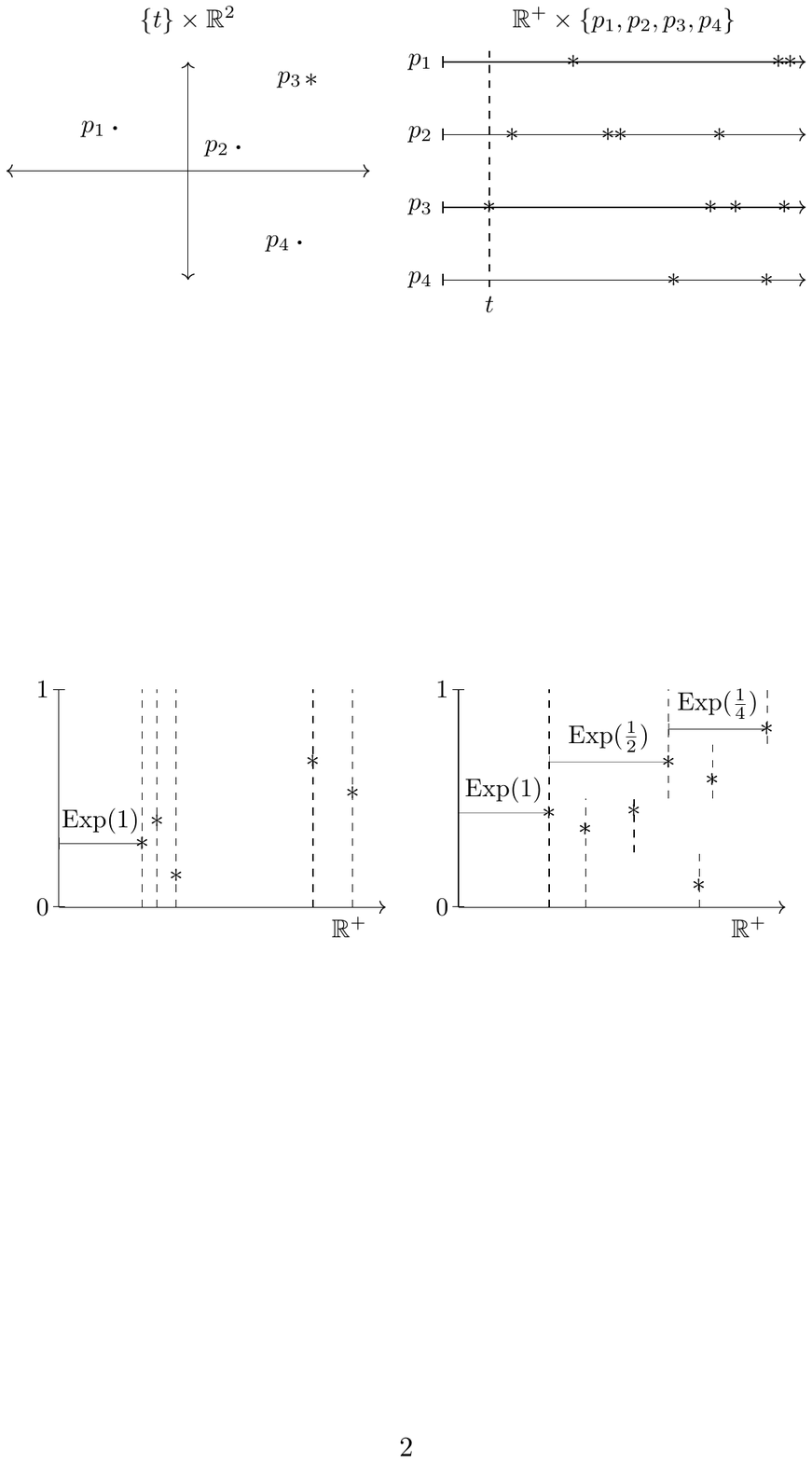}
\caption{Two methods for simulating an exponential race. The left hand plot shows the first arrivals of a uniform exponential race on $[0, 1]$ simulated by \Algo{alg:er}. The right hand plot shows the first arrivals of an exponential race simulated over a space partitioning tree. Dashed lines dominate the set in which an arrival is first.}
\label{fig:eralt}
\end{center}
\end{figure}

\subsection{Transforming an exponential race with accept-reject and perturb}
Most finite nonzero measures $P$ on $\R^n$ are not tractable. Monte Carlo methods accomplish their goal of sampling from intractable distributions by transforming samples of tractable distributions. In this subsection we present accept-reject and perturb operations, which transform a realization of an exponential race with measure $Q$ into a realization of an exponential race with a distinct measure $P$. In practice $Q$ will be tractable and $P$ intractable, so that simulating an exponential race with an intractable measure can be accomplished by simulating the points of an exponential race with a tractable measure, for example via \Algo{alg:er}, and transforming it with accept-reject or perturb operations. The accept-reject and perturb operations are named after their respective literatures, accept-reject corresponds to rejection sampling and perturb corresponds to the Gumbel-Max trick. The correspondence between the perturb operation and the Gumbel-Max trick may not be obvious, so we discuss this in Section \ref{sec:gup}.

Let $Q$ and $P$ be finite nonzero measures in $\R^n$. We assume that they have densities $g$ and $f$ with respect to some base measure $\mu$,
\begin{align}
\label{eq:density}
Q(B) = \int_B g(x) \mu(dx) \qquad P(B) = \int_B f(x) \mu(dx).
\end{align}
We assume that $g$ and $f$ have the same support and their ratio is bounded,
\begin{align}
\label{eq:bounded}
\mathrm{supp}(f) = \mathrm{supp}(g) \qquad \frac{f(x)}{g(x)} \leq M \text{ for all } x \in \mathrm{supp}(g)
\end{align}
where $\mathrm{supp}(g) = \{x \in \R^n : g(x) \neq 0\}$. The assumption $\mathrm{supp}(f) = \mathrm{supp}(g)$ can be softened here and throughout the chapter to $\mathrm{supp}(f) \subseteq \mathrm{supp}(g)$, but it complicates the analysis. The accept-reject strategy is to realize more points than needed from an exponential race with measure $MQ(\cdot)$ and stochastically \emph{reject} points with probability equal to the ratio of instantaneous rates of arrival, $f(x)/(g(x)M)$.
The perturbation strategy is to realize just the points needed from an exponential race with measure $Q$, but to \emph{perturb} the arrival times according to the transformation $t \to t g(x)/f(x)$ for all points arriving at $x$.

Before we present the proofs, consider the following intuition. Imagine taking a long exposure photograph of the plane as instantaneous flashes arrive according to an exponential race with measure $Q$. The rate at which points arrive will determine the intensity of a heat map with regions receiving more points brighter than those receiving fewer. Over time the relative intensities will correspond to the probability distribution proportional to $Q$. If someone were just ahead of us in time and stochastically discarded points that arrived in $B$ or delayed points in $B$ relative to points in $B^c$, then our perception of the likelihood of $B$ would change. Mired in time, we would not be able to distinguish whether points were discarded, reordered, or the true measure $Q$ was in fact different.

The correctness of these operations on an exponential race can be justified as special cases of the Thinning and Mapping Theorems.
\begin{lemma}[Accept-Reject]
\label{lem:accept}
Let $Q$ and $P$ be finite nonzero measures on $\R^n$ under assumptions (\ref{eq:density}) and (\ref{eq:bounded}). If $R\subseteq \R^+ \times \R^n$ is an exponential race with measure $MQ(\cdot)$ and $\keep(t,x) \sim \Bernoulli(\rho(t,x))$ is i.i.d. for all $(t, x)$ with probability \begin{align*}
\rho(t,x) = \frac{f(x)}{g(x)M},
\end{align*}
then $\thin(R, \keep)$, from (\ref{eq:thindef}), is an exponential race with measure $P$.
\end{lemma}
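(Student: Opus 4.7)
The plan is to recognize the Accept-Reject lemma as essentially a direct application of the Thinning Theorem (\Theo{thm:thin}) to the particular Poisson process underlying an exponential race, followed by a mean-measure computation that identifies the thinned process as an exponential race with measure $P$.

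First I would unpack the definitions. By definition of an exponential race, $R$ is a Poisson process on $\R^+ \times \R^n$ with mean measure $\lambda \times MQ$, and this mean measure is $\sigma$-finite and nonatomic (the product with Lebesgue measure $\lambda$ on $\R^+$ kills any atoms in $Q$). The thinning probability $\rho(t,x) = f(x)/(g(x)M)$ is measurable, takes values in $[0,1]$ by assumption (\ref{eq:bounded}), and depends only on the spatial coordinate — so it defines a valid thinning function on $\R^+ \times \R^n$. Applying \Theo{thm:thin} immediately gives that $\thin(R,\keep)$ is a Poisson process on $\R^+ \times \R^n$ with mean measure
\begin{align*}
\mu^*(C) = \int_C \frac{f(x)}{g(x) M} \, (\lambda \times MQ)(d(t,x)).
\end{align*}

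The main computation is to show $\mu^* = \lambda \times P$. Using the density representation (\ref{eq:density}), for a product set $(0,s] \times B$ the $M$ cancels and Fubini gives
\begin{align*}
\mu^*((0,s] \times B) = \int_0^s \!\! \int_B \frac{f(x)}{g(x) M} \cdot M g(x) \, \mu(dx) \, dt = s \int_B f(x) \, \mu(dx) = s P(B),
\end{align*}
which equals $(\lambda \times P)((0,s] \times B)$. Since the product measures $\mu^*$ and $\lambda \times P$ agree on the $\pi$-system of such rectangles that generates the Borel sets on $\R^+ \times \R^n$, they agree everywhere by standard uniqueness of product measure extensions, so $\mu^* = \lambda \times P$. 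Note that $P$ is finite and nonzero by assumption, so $\lambda \times P$ is $\sigma$-finite and nonatomic, as required for an exponential race.

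Finally, I would verify the second defining property of an exponential race: total ordering by time. Because $R$ is an exponential race, its points are almost surely totally ordered by time (equivalently, no two share a first coordinate — which is guaranteed since $\lambda \times MQ$ is nonatomic). The thinned set $\thin(R,\keep) \subseteq R$ inherits this ordering. Combining the Poisson-process characterization with mean measure $\lambda \times P$ and the total ordering, $\thin(R,\keep)$ is an exponential race with measure $P$ by definition. There is no real obstacle here; the one subtle point worth flagging is the role of the boundedness assumption $f/g \le M$, which is exactly what ensures $\rho \le 1$ so that thinning is well defined — without it, the acceptance probability would exceed one and the operation would not correspond to any Bernoulli decision.
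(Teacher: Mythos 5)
Your proposal is correct and follows essentially the same route as the paper: apply the Thinning Theorem to the Poisson process $R$ with mean measure $\lambda \times MQ$, compute that the thinned mean measure equals $\lambda \times P$ (the $M$ cancelling against the density $Mg$), and note that the thinned points inherit the total ordering in time, so the result is an exponential race with measure $P$. Your additional care about measurability, the $\pi$-system uniqueness argument, and nonatomicity only fills in details the paper leaves implicit.
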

\begin{proof}
By the Thinning Theorem, the mean measure of $\thin(R, \keep)$ is
\begin{align*}
\iint\limits_{B} \frac{f(x) }{g(x)M} g(x)M\mu(dx)  \lambda(dt) = \iint\limits_{B} f(x) \mu(dx)  \lambda(dt) = (\lambda \times P)(B).
\end{align*}
for $B \subseteq \R^+ \times \mathrm{supp}(g)$. The subsampled $(T_i, X_i)$ are in order and thus an exponential race with measure $P$.
\end{proof}

\begin{lemma}[Perturbation]
\label{lem:perturb}
Let $Q$ and $P$ be finite nonzero measures on $\R^n$ under assumptions (\ref{eq:density}) and (\ref{eq:bounded}). If $R \subseteq \R^+ \times \R^n$ is an exponential race with measure $Q$ and 
\begin{align*}
\shear(t, x) = \left(t\frac{g(x)}{f(x)}, x\right),
\end{align*}
then $\sort(\shear(R))$ is an exponential race with measure $P$ where $\sort$ totally orders points by the first coordinate, time.
\end{lemma}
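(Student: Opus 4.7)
The plan is to recognize the perturb operation as an instance of the Mapping Theorem (Theorem \ref{thm:map}) applied to the underlying Poisson process, followed by a re-sorting step to restore the time ordering required by the definition of an exponential race.

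First I would verify that $h(t,x) = (tg(x)/f(x), x)$ is a measurable 1-1 map from $\R^+ \times \mathrm{supp}(g)$ to $\R^+ \times \mathrm{supp}(f)$. By assumption (\ref{eq:bounded}), $\mathrm{supp}(g) = \mathrm{supp}(f)$, so $g(x)/f(x)$ is strictly positive and finite on this common support, making $t \mapsto tg(x)/f(x)$ a bijection of $\R^+$ for each fixed $x$. Since $h$ preserves the second coordinate, injectivity in the first coordinate gives injectivity of $h$. Outside of $\R^+ \times \mathrm{supp}(g)$ the race $R$ places no points almost surely, so we may restrict attention to this set.

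Next I would apply Theorem \ref{thm:map} to conclude that $h(R)$ is a Poisson process with mean measure $\mu^*(B) = (\lambda \times Q)(h^{-1}(B))$ for $B \subseteq \R^+ \times \R^n$. The key computation is to show $\mu^* = \lambda \times P$. For a measurable rectangle $B \subseteq \R^+ \times \mathrm{supp}(g)$, the preimage is $h^{-1}(B) = \{(t,x) : (tg(x)/f(x), x) \in B\}$, and using the density $g$ of $Q$ with respect to $\mu$,
\begin{align*}
\mu^*(B) &= \iint_{h^{-1}(B)} g(x)\, \mu(dx)\, \lambda(dt) = \int_{\mathrm{supp}(g)} \left( \int_0^{\infty} \mathbf{1}_{B}\!\left(\tfrac{t g(x)}{f(x)}, x\right) dt \right) g(x)\, \mu(dx).
\end{align*}
For fixed $x$ the substitution $s = t g(x)/f(x)$ gives $dt = (f(x)/g(x))\, ds$, so the inner integral equals $(f(x)/g(x)) \int_0^{\infty} \mathbf{1}_B(s,x)\, ds$. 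The $g(x)$ factors cancel, leaving
\begin{align*}
\mu^*(B) = \iint_B f(x)\, \mu(dx)\, \lambda(ds) = (\lambda \times P)(B),
\end{align*}
so $h(R)$ is a Poisson process with mean measure $\lambda \times P$.

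Finally I would handle the ordering. The perturbation rescales times by a factor that depends on location, so the original time ordering of $R$ is not preserved: a later point in $R$ may be mapped to an earlier time in $h(R)$. However, since $\lambda \times P$ is $\sigma$-finite and nonatomic, $h(R) \cap (0,s] \times \R^n$ is almost surely finite for every $s > 0$, which means $h(R)$ can be totally ordered by time, and $\sort(\shear(R)) = \sort(h(R))$ is well-defined. Combining the Poisson process property with this total ordering yields an exponential race with measure $P$. The only genuinely delicate step is the change-of-variables computation above; injectivity and the sorting technicality are routine once one notices that $g(x)/f(x) \in (0, \infty)$ on the common support.
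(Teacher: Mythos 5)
Your proof is correct and follows essentially the same route as the paper: apply the Mapping Theorem to $\shear$ (which is 1-1 on the common support) and verify that the pushed-forward mean measure is $\lambda \times P$ by computing the measure of preimages, the paper doing this on sets $(0,s]\times A$ and you via an equivalent change of variables, plus sorting to restore the time order. Your extra remarks on injectivity and the almost-sure finiteness of arrivals in bounded time intervals only make explicit what the paper leaves implicit.
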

\begin{proof}
$\shear$ is 1-1 on $\mathrm{supp}(f)$, so the Mapping Theorem applies. It is enough to check the mean measure of $\shear(R)$ on subsets of the form $B = (0, s] \times A$ for $s \in \R^+$ and $A \subseteq \mathrm{supp}(g)$,
\begin{align*}
\iint\limits_{h^{-1}(B)} g(x) \lambda(dt) \mu(dx) = \int\limits_A  g(x) s\frac{f(x)}{g(x)} \mu(dx) = (\lambda \times P)(B).
\end{align*}
Thus, sorting $\shear(T_{i}, X_{i})$ forms an exponential race with measure $P$.
\end{proof}

\section{Gumbel processes}
\label{sec:gup}
\subsection{Definition and construction}
The central object of the Gumbel-Max trick is a random function over a finite set whose values are Gumbel distributed. Gumbel valued functions over a finite choice set are extensively studied in random choice theory, where there is a need for a statistical model of utility (\citeauthor{yellott1977relationship}, \citeyear{yellott1977relationship} for example). The extension to Gumbel valued functions over continuous spaces has been explored in random choice theory \citep{malmberg2013random} and in the context of Monte Carlo simulation \citep{maddison2014astarsamp}. Following \cite{maddison2014astarsamp} we will refer to this class of Gumbel valued functions on $\R^n$ as Gumbel processes. Gumbel processes underpin the recent interest in perturbation based Monte Carlo methods, because their maxima are located at samples from probability distributions, see also \citep{papandreou2011perturb, tarlow2012randomized, hazan2013perturb, 2015arXiv150609039C, kim2016lprelaxsamp}. In this section we clarify the connection between Gumbel processes and our development of exponential races. We will show that the value of a Gumbel process at $x \in \R^n$ can be seen as the log transformed time of the first arrival at $x$ of some exponential race. This has the advantage of simplifying their construction and connecting the literature on the Gumbel-Max trick to our discussion. Related constructions have also been considered in the study of extremal processes \citep{resnick2013extreme}. In this subsection we define and construct Gumbel processes. In the next subsection we discuss their simulation and present a generalized Gumbel-Max trick derived from the Perturbation Lemma.

The Gumbel distribution dates back to the statistical study of extrema and rare events \citep{gumbel}. The Gumbel is a member of a more general class of extreme value distributions. A central limit theorem exists for these distributions --- after proper renormalization the maximum of an i.i.d. sample of random variables converges to one of three possible extreme value distributions \citep{gedenko}. The Gumbel is parameterized by a location $\mu \in \R$.
\begin{definition}[Gumbel random variable] $G$ is a Gumbel distributed random variable on $\R$ with location $\mu \in \R$ if
\begin{align*}
\proba(G \leq g) = \exp(-\exp(-g + \mu))
\end{align*}
This is denoted $G \sim \Gumbel(\mu)$ and $G \sim \Gumbel(-\infty)$ is the random variable whose value is $-\infty$ with probability one. If $G \sim \Gumbel(0)$, then $G + \mu \sim \Gumbel(\mu)$.
\end{definition}
\noindent The Gumbel distribution has two important properties for our purposes. The distribution of the maximum of independent Gumbels is itself a Gumbel --- a property known as max-stability --- and the index of the maximum follows the Gibbs distribution: if $G(i) \sim \Gumbel( \mu_i)$, then 
\begin{align*}
\max_{1 \leq i \leq m} G(i) \sim \Gumbel(\log \sum\limits_{i=1}^m \exp(\mu_i)) \quad \argmax_{1 \leq i \leq m} G(i) \sim \frac{\exp(\mu_i)}{\sum_{i=1}^m \exp(\mu_i)}.
\end{align*}
 The Gumbel-Max trick of the introduction for sampling from a discrete distribution with mass function $f : \{1, \ldots, m\} \to \R^+$ is explained by taking $\mu_i = \log f(i)$. It is informative to understand these properties through the Gumbel's connection to the exponential distribution.
\begin{lemma}
\label{lem:gumbel}
If $E \sim \Exponential(\lambda)$ with nonnegative rate $\lambda \in \R^{\geq 0}$, then 
\begin{align*}
-\log E \sim \Gumbel(\log \lambda).
\end{align*}
\end{lemma}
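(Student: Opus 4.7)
The plan is to prove this by a direct change-of-variables computation on the cumulative distribution functions, since $g \mapsto -\log g$ is a strictly decreasing bijection on $\R^+$ and we have explicit CDFs on both sides.

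First I would handle the nondegenerate case $\lambda > 0$. Starting from the definition $\proba(E > t) = \exp(-\lambda t)$ for $t \in \R^+$, I would rewrite the event $\{-\log E \leq g\}$. Since $-\log$ is strictly decreasing, this event equals $\{E \geq \exp(-g)\}$, and because $E$ has a continuous distribution the inequality can be made strict without changing the probability. Then
\begin{align*}
\proba(-\log E \leq g) = \proba(E > \exp(-g)) = \exp(-\lambda \exp(-g)) = \exp(-\exp(-g + \log \lambda)),
\end{align*}
which is exactly the Gumbel CDF with location $\log \lambda$.

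Second, I would dispose of the boundary case $\lambda = 0$. By the definition in the excerpt, $E \sim \Exponential(0)$ is $+\infty$ almost surely, so $-\log E = -\infty$ almost surely, which matches the convention $\Gumbel(-\infty) = \Gumbel(\log 0)$. This takes care of the extended range $\lambda \in \R^{\geq 0}$.

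I do not expect any real obstacle here: the lemma is a one-line transformation, and the only subtlety is matching the degenerate rate $\lambda = 0$ with the degenerate location $\mu = -\infty$ under the convention $\log 0 = -\infty$ used throughout the chapter. Everything else is algebra inside the exponentials.
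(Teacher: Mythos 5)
Your proposal is correct and follows essentially the same route as the paper: both rewrite $\proba(-\log E \leq g) = \proba(E \geq \exp(-g)) = \exp(-\exp(-g + \log\lambda))$ and read off the Gumbel CDF. Your explicit handling of the degenerate case $\lambda = 0$ (matching $\Exponential(0)$ to $\Gumbel(-\infty)$) is a harmless addition that the paper leaves to its stated conventions.
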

\begin{proof}
$\proba(-\log E \leq g) = \proba(E \geq \exp(-g)) = \exp(- \exp(-g + \log \lambda))$
\end{proof}
\noindent Therefore the distribution of the maximum and argmaximum of Gumbels is explained by Lemma \ref{lem:exp}, because passing a maximization through $-\log$ becomes a minimization.

A Gumbel process $G : \R^n \to \R \cup \{-\infty\}$ is a Gumbel valued random function. Their characterizing property is that the maximal values of a Gumbel process over the subsets $B \subseteq \R^n$ are marginally Gumbel distributed with a location that scales logarithmically with the volume of $B$ according to some finite nonzero measure $P$,
\begin{align*}
\max_{x \in B} G(x) \sim \Gumbel(\log P(B))
\end{align*}
Implicit in this claim is the assertion that the maximizations $\max_{x \in B} G(x)$ are well-defined --- the maximum exists --- for all $B \subseteq \R^n$.

\begin{definition}[Gumbel process] 
\label{def:gup}
Let $P$ be a finite nonzero measure on $\R^n$, $G: \R^n \to \R\cup\{-\infty\}$ a random function, and 
\begin{align}
\label{eq:gupmax}
G^*(B) = \max_{x \in B} G(x).
\end{align}
$G$ is a Gumbel process with measure $P$ if
\begin{enumerate}
\item For $B \subseteq \R^n$, $G^*(B) \sim \Gumbel(\log P(B))$.
\item For $A_1, \ldots, A_m$ are disjoint, $G^*(A_i)$ are independent.
\end{enumerate}
\end{definition}
\noindent Note, the event that $\argmax_{x \in \R^n} G(x)$ lands in $B \subseteq \R^n$ depends on which of $G^*(B)$ or $G^*(B^c)$ is larger. Following this reasoning one can show that the argmax over $\R^n$ is distributed as $P(\cdot)/P(\R^n)$. 

The study of Gumbel processes can proceed without reference to exponential races, as in \cite{maddison2014astarsamp}, but our construction from exponential races is a convenient shortcut that allows us to import results from Section \ref{sec:er}. Consider the function that reports the arrival time of the first arrival at $x \in \R^n$ for an exponential race $R$ with measure $P$,
\begin{align*}
T(x) = \min \{T_i : (T_i, x) \in R\}
\end{align*}
This function is almost surely infinite at all $x$, but for any realization of $R$ it will take on finite value at countably many points in $\R^n$. Moreover, the minimum of $T(x)$ over subsets $B \subseteq \R^n$ is well-defined and finite for sets with positive measure $P(B) > 0$; it is exponentially distributed with rate $P(B)$. In this way we can see that $- \log T(x)$ is Gumbel process, Figure \ref{fig:gup}. 

\begin{figure}[t]
\begin{center}
\includegraphics{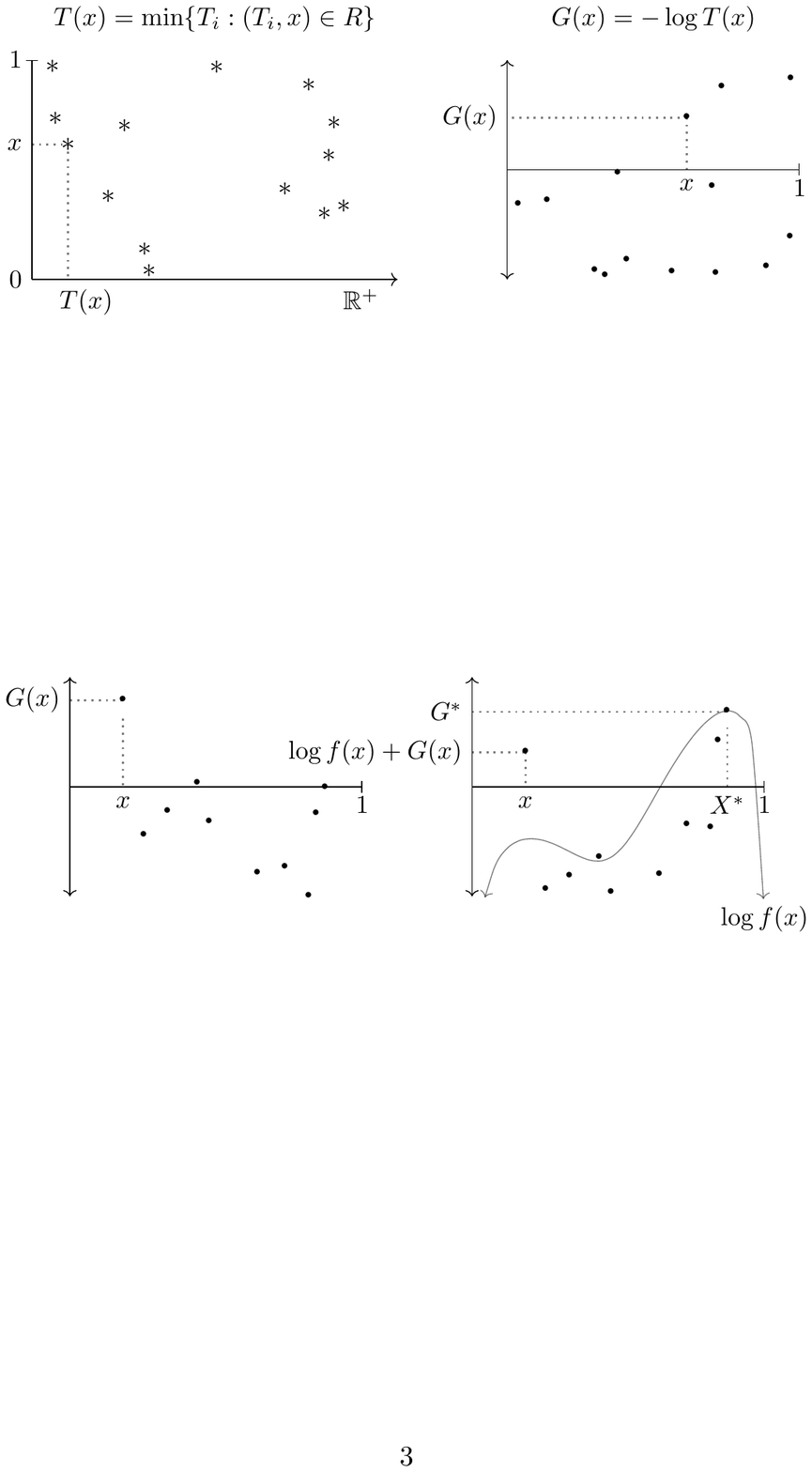}
\caption{Constructing a uniform Gumbel process $G : \R^n \to \R \cup \{-\infty\}$ on $[0,1]$ with an exponential race. The left hand plot shows the first arrivals $\ast$ of a uniform exponential race $R$. The right hand plot shows $G(x)$ set to $-\log$ the time $T(x)$ of the first arrival at $x$. The graph of $G(x)$ extends downwards to $-\infty$ taking on finite value at all points in $[0, 1]$ that have arrivals and $-\infty$ for all points with no arrivals.}
\label{fig:gup}
\end{center}
\end{figure}

\begin{theorem}
Let $R \subseteq \R^+ \times \R^n$ be an exponential race with measure $P$. 
\begin{align}
\label{eq:gupcon}
G(x) = -\log \min \{T_i : (T_i, x) \in R\}
\end{align}
is a Gumbel process with measure $P$.
\end{theorem}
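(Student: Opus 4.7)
The plan is to show both conditions of Definition \ref{def:gup} by reducing them to properties of $R$ as a Poisson process with mean measure $\lambda \times P$ on $\R^+ \times \R^n$. The key observation is that
\begin{align*}
G^*(B) = \max_{x \in B} G(x) = -\log \min_{x \in B} T(x) = -\log T^*(B),
\end{align*}
where $T^*(B) = \min\{T_i : (T_i, X_i) \in R \text{ and } X_i \in B\}$ is the first arrival time in $B$. So the whole theorem collapses to understanding the joint distribution of the first arrival times $T^*(A_j)$ and then pushing them through $-\log$.

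First I would handle the marginal distribution. Using the same trick as in the motivating example of equation \eqref{eq:arrival}, I note that $\{T^*(B) > t\}$ is exactly the event $\{N((0, t] \times B) = 0\}$. Since $R$ is a Poisson process with mean measure $\lambda \times P$,
\begin{align*}
\proba(T^*(B) > t) = \proba(N((0,t] \times B) = 0) = \exp(-t P(B)),
\end{align*}
so $T^*(B) \sim \Exponential(P(B))$. (When $P(B) = 0$, this reads $T^*(B) = \infty$ almost surely and $G^*(B) = -\infty$, consistent with $\Gumbel(-\infty)$.) Lemma \ref{lem:gumbel} then gives $G^*(B) = -\log T^*(B) \sim \Gumbel(\log P(B))$, establishing the first property.

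For the independence property, let $A_1, \ldots, A_m \subseteq \R^n$ be disjoint, so that the cylinders $\R^+ \times A_j$ are also pairwise disjoint. Each $T^*(A_j)$ is determined entirely by the counts $N((0, t] \times A_j)$ as $t$ varies, which are measurable with respect to the restriction of $R$ to $\R^+ \times A_j$. Since counts of a Poisson process over disjoint sets are independent, the random variables $T^*(A_1), \ldots, T^*(A_m)$ are independent, and therefore so are $G^*(A_j) = -\log T^*(A_j)$. The one technicality to check is that $T^*(B)$ is well-defined as a minimum: for $P(B) > 0$, the set $\{T_i : X_i \in B\}$ is unbounded above and its intersection with any $(0, s]$ has finitely many elements almost surely (being $\Poissonmad(sP(B))$-distributed), so the infimum is attained. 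This will be the only mildly subtle step, but it is a routine consequence of the definition of an exponential race.
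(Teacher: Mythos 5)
Your proposal is correct and follows essentially the same route as the paper: identify $G^*(B)$ with $-\log$ of the first arrival time $T(0,B)$, get the exponential marginal from the void probability $\proba(N((0,t]\times B)=0)=\exp(-tP(B))$, convert via Lemma~\ref{lem:gumbel}, and inherit independence over disjoint sets from Poisson process independence. The paper simply cites the facts about $T(0,B)$ already established in the proof of Theorem~\ref{thm:er}, whereas you rederive them inline; the content is the same.
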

\begin{proof}
First, for $x\in \R^n$
\begin{align*}
\min \{T_i : (T_i, x) \in R\} = T(0, \{x\}),
\end{align*}
where $T(0, B)$ is the first arrival time in subset $B \subseteq \R^n$ defined in (\ref{eq:arrivaltime}) from Theorem \ref{thm:er}. Thus $G^*(B)$ of (\ref{eq:gupmax}) is well defined, because
\begin{align*}
G^*(B) = \max_{x \in B} - \log \min \{T_i : (T_i, x) \in R\} = -\log T(0, B).
\end{align*}
$G^*(B)$ inherits the independence properties from Poisson process independence. Finally, Lemma \ref{lem:gumbel} gives us the marginal distribution of $G^*(B)$.
\end{proof}

\subsection{Simulating a Gumbel process and the Gumbel-Max trick}

Gumbel processes are relevant to Monte Carlo simulation in the same sense that we motivated exponential races --- if we can simulate the maximum value of a Gumbel process with measure $P$, then its location is a sample from the distribution $P(\cdot)/P(\R^n)$. \cite{maddison2014astarsamp} gave an algorithm for simulating Gumbel processes with tractable measures and a generalized Gumbel-Max trick for transforming their measure. We present those results derived from our results for exponential races.

\begin{algorithm}[b]
\caption{A Gumbel process with finite measure $Q$}\label{alg:gup}
\begin{algorithmic}
	\State Initialize $G(x) = -\infty$ for all $x \in \R^n$.
	\State $(\Omega_1, G_0, i)  = (\R^n, \infty, 1)$
	\While{$Q(\Omega_i) > 0$}
		\State $G_i \sim \TruncGumbel(\log Q(\Omega_i), G_{i-1})$
		\State $X_i \sim Q(\cdot \cap \Omega_i)/Q(\Omega_i)$
		\State $G(X_i) = G_i$ \% assign $G(x)$ at $X_i$ to $G_i$
		\State $\Omega_{i+1} = \Omega_{i} - \{X_i\}$
		\State $i = i + 1$
	\EndWhile
\end{algorithmic}
\end{algorithm}

The Gumbel process $G$ from construction (\ref{eq:gupcon}) has value $-\infty$ everywhere except at the countably many arrival locations of an exponential race. Therefore, for tractable measures $Q$ we could adapt \Algo{alg:er} for exponential races to simulate $G(x)$. The idea is to initialize $G(x) = -\infty$ everywhere and iterate through the points $(T_i, X_i)$ of an exponential race $R$ setting $G(X_i) = - \log T_i$. To avoid reassigning values of $G(x)$ we refine space as in Section \ref{subsec:ersim} by removing the locations generated so far. \Algo{alg:gup} implements this procedure, although it is superficially different from our description. In particular the value $G(X_i)$ is instead set to a truncated Gumbel $G_i \sim \TruncGumbel(\log Q(\Omega_i), G_{i-1})$, a Gumbel random variable with location $\log Q(\Omega_i)$ whose domain is truncated to $(-\infty, G_{i-1}]$. The connection to \Algo{alg:er} can be derived by decomposing the arrival times $T_i = \sum_{j=1}^i E_j$ for $E_j \sim \Exponential(Q(\Omega_j))$ and then considering the joint distribution of $G_i = - \log(\sum_{j=1}^i E_j)$. A bit of algebraic manipulation will reveal that
\begin{align*}
G_i \, | \, G_{i-1} \sim \TruncGumbel(\log Q(\Omega_i), G_{i-1})
\end{align*}
Thus, translating between procedures for simulating Gumbel processes and procedures for simulating exponential races is as simple as replacing chains of truncated Gumbels with partial sums of exponentials.

For continuous measures removing countably many points from the sample space has no effect, and in practice the removal line of \Algo{alg:gup} can be omitted. For those and many other measures \Algo{alg:gup} will not terminate; instead it iterates through the infinitely many finite values of $G(x)$ in order of their rank. For discrete measures with finite support \Algo{alg:gup} will terminate once every atom has been assigned a value. 

Finally, for simulating Gumbel processes with intractable measures $P$ the Perturbation Lemma of exponential races justifies a generalized Gumbel-Max trick. The basic insight is that multiplication by the ratio of densities $g(x)/f(x)$ becomes addition in log space.

\begin{lemma}[Gumbel-Max trick]
\label{lem:gumbelmax}
Let $Q$ and $P$ be finite nonzero measures on $\R^n$ with densities $g$ and $f$ under assumptions (\ref{eq:density}) and (\ref{eq:bounded}). If $G : \R^n \to \R \cap \{-\infty\}$ is a Gumbel process with measure $Q$, then
\begin{align*}
G^{\prime}(x) = \begin{cases}
\log f(x) - \log g(x) + G(x) & x \in \mathrm{supp}(g)\\
- \infty & \text{ otherwise}
\end{cases}
\end{align*}
is a Gumbel process with measure $P$. In particular for $G^* = \max_{x \in \R^n} G^{\prime}(x)$ and $X^* = \argmax_{x \in \R^n} G^{\prime}(x)$,
\begin{align*}
G^* \sim \Gumbel(\log P(\R^n)) \qquad X^* \sim P(\cdot)/P(\R^n)
\end{align*}
\end{lemma}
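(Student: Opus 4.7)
The plan is to reduce the statement to the Perturbation Lemma (Lemma \ref{lem:perturb}) via the exponential race construction of a Gumbel process in (\ref{eq:gupcon}). Since any Gumbel process with measure $Q$ has the same finite-dimensional distributions as one built from an exponential race, I may assume without loss of generality that $G(x) = -\log T(x)$ where $T(x) = \min\{T_i : (T_i, x) \in R\}$ and $R$ is an exponential race with measure $Q$. This is the crucial reduction because it lets me lift the multiplicative reweighting in time to the additive log-space formula defining $G'$.

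Next I would apply the perturb map of Lemma \ref{lem:perturb}, $\shear(t,x) = (t\, g(x)/f(x), x)$, to the race $R$. By the lemma, $\sort(\shear(R))$ is an exponential race with measure $P$. Since perturb leaves the spatial coordinate $x$ fixed and only rescales the time coordinate, the first-arrival-time function at $x$ of the new race is
\begin{align*}
T'(x) = T(x)\, g(x)/f(x) \quad \text{for } x \in \mathrm{supp}(g),
\end{align*}
and $+\infty$ otherwise (no arrivals land outside $\mathrm{supp}(g)$ since $Q$ has support there). Taking negative logs gives
\begin{align*}
-\log T'(x) = \log f(x) - \log g(x) - \log T(x) = \log f(x) - \log g(x) + G(x) = G'(x)
\end{align*}
on $\mathrm{supp}(g)$, and $-\infty$ off of it, matching the definition of $G'$ exactly. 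By the construction theorem (the unnumbered theorem that (\ref{eq:gupcon}) yields a Gumbel process), applied to the race $\sort(\shear(R))$ with measure $P$, the function $G'$ is a Gumbel process with measure $P$.

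The final assertions about $G^*$ and $X^*$ then follow directly. Taking $B = \R^n$ in Definition \ref{def:gup} gives $G^* = G^{*}(\R^n) \sim \Gumbel(\log P(\R^n))$. For $X^*$, I would use the remark immediately following Definition \ref{def:gup} that the argmax over $\R^n$ of a Gumbel process with measure $P$ is distributed as $P(\cdot)/P(\R^n)$; alternatively, one can note that $X^*$ is the spatial coordinate of the first arrival of $\sort(\shear(R))$ and apply Theorem \ref{thm:er}, which says that the first arrival location is exactly $P(\cdot)/P(\R^n)$-distributed.

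The only delicate point is the support bookkeeping: I need to confirm that perturb does not move mass in or out of $\mathrm{supp}(g) = \mathrm{supp}(f)$ and that no arrivals of $R$ lie outside $\mathrm{supp}(g)$, so the case split in the definition of $G'$ matches the case split produced by the construction. The boundedness assumption (\ref{eq:bounded}) and the identity of supports guarantee this. No other step requires real work beyond invoking the prior lemmas.
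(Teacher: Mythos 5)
Your proposal is correct and follows essentially the same route as the paper's own proof: apply the Perturbation Lemma to the exponential-race construction (\ref{eq:gupcon}) of the Gumbel process, identify $\log f(x) - \log g(x) + G(x)$ with $-\log$ of the perturbed first-arrival time, and read off the max/argmax distributions from Definition \ref{def:gup} and Theorem \ref{thm:er}. The paper presents this argument informally (deferring the formal treatment, including the ``without loss of generality'' identification of an arbitrary Gumbel process with the race construction, to \cite{maddison2014astarsamp}), and your write-up operates at the same level of rigor while being somewhat more explicit about the support bookkeeping.
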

\begin{proof}
Arguing informally, this follows from the Perturbation Lemma applied to our construction (\ref{eq:gupcon}) of Gumbel processes. For $x \in \mathrm{supp}(g)$
\begin{align*}
\log f(x) - \log g(x) + G(x) = - \log \min \{T_i g(x)/ f(x) : (T_i, x) \in R\}.
\end{align*}
See \cite{maddison2014astarsamp} for a formal proof.
\end{proof}

\begin{figure}[t]
\begin{center}
\includegraphics{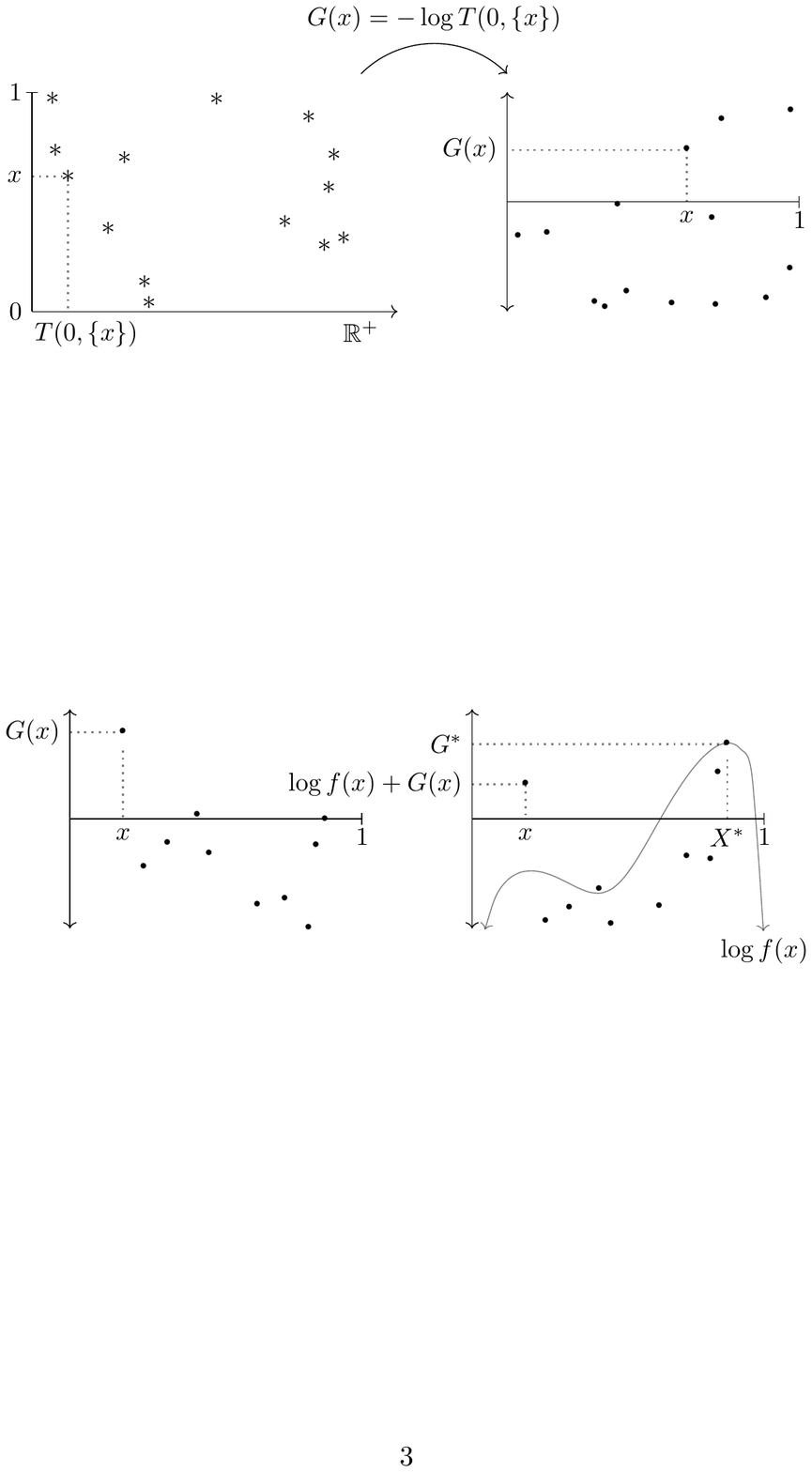}
\caption{A continuous Gumbel-Max trick. The left hand plot shows the maximal values of a uniform Gumbel process $G(x)$ on $[0,1]$. The right hand plot shows the result of perturbing $\log f(x)$ with $G(x)$. Notice that the ordering of values changes, and $X^*$ is now the location of the maximum $G^* = \max_x \log f(x) + G(x)$. Therefore, $X^*$ is a sample from the distribution with density proportional to $f(x)$.}
\label{fig:contgumbelmax}
\end{center}
\end{figure}
\noindent When $Q$ is the counting measure on $\{1, \ldots, m\}$, Lemma \ref{lem:gumbelmax} exactly describes the Gumbel-Max trick of the introduction. This brings full circle the connection between accept-reject and the Gumbel-Max trick. 

A Gumbel process is not profoundly different from an exponential race, but the difference of perspective --- a function as opposed to a random set --- can be valuable. In particular consider the following generalization of a result from Hazan and Jaakkola of this book. Let $G : \R^n \to \R \cup \{-\infty\}$ be a Gumbel process with measure $P$ whose density with respect to $\mu$ is $f$. If $G^* = \max_{x \in \R^n} G(x)$ and $X^*= \argmax_{x \in \R^n} G(x)$, then
\begin{align*}
\expectmad(G^*) = \log P(\R^n) + \gamma \qquad \expectmad(- \log f(X^*) + G^*) = H(f) + \gamma,
\end{align*}
where $H(f)$ is the entropy of a probability distribution with probability density function proportional to $f$ and $\gamma$ is the Euler-Mascheroni constant. Therefore the representation of probability distributions through Gumbel processes gives rise to a satisfying and compact representation of some of their important constants.

\section{Monte Carlo methods that use bounds}
\label{sec:alg}

\subsection{Rejection sampling}
In this section we present practical Monte Carlo methods that use bounds on the ratio of densities to produce samples from intractable distributions. We show how these methods can be interpreted as algorithms that simulate the first arrival of an exponential race. The basic strategy for proving their correctness is to argue that they perform accept-reject or perturb operations on the realization of an exponential race until they have provably produced the first arrival of the transformed race. We start by discussing the traditional rejection sampling and a related perturbation based method. Then we study OS* \citep{dymetman2012osstar}, an accept-reject method, and A* sampling \citep{maddison2014astarsamp}, a perturbation method. These algorithms have all been introduced elsewhere in the literature, so for more information we refer readers to the original papers.

Throughout this section our goal is to draw a sample from the probability distribution proportional to some measure $P$ with density $f$ with respect to some base measure $\mu$. We assume, as in the Accept-Reject and Perturbation Lemmas, access to a tractable proposal distribution proportional to a measure $Q$ with density $g$ with respect to $\mu$ such that $f$ and $g$ have the same support and the ratio $f(x)/g(x)$ is bounded by some constant $M$. For example consider the sample space $\{0,1\}^n$ whose elements are bit vectors of length $n$. A proposal distribution might be proportional to the counting measure $Q$, which counts the number of configurations in a subset $B \subseteq \{0,1\}^n$. Sampling from $Q(\cdot)/Q(\{0,1\}^n)$ is as simple as sampling $n$ independent $\Bernoulli(1/2)$.

Rejection sampling is the classic Monte Carlo method that uses bound information. It proposes $(X, U)$ from $Q$ and $\Uniform[0,1]$, respectively, and accepts $X$ if $U \leq f(X)/(g(X)M)$. The algorithm terminates at the first acceptance and is normally justified by noticing that it samples uniformly from the region under the graph of $f(x)$ by rejecting points that fall between $g(x)M$ and $f(x)$, see the left hand graph on \Fig{fig:alg} for an intuition. The acceptance decision also corresponds exactly to the accept-reject operation on exponential races, so we can interpret it as an procedure on the points of an exponential race. We call this procedure $\rej$ for short,
\begin{algorithmic}
	\For{$(T_i, X_i) \in R$ simulated by \Algo{alg:er} with measure $M Q(\cdot)$}
		\State $U_i \sim \Uniform[0,1]$.
		\If{$U_i < f(X_i)/(g(X_i)M)$}
			\Return $(T_{i}, X_i)$
		\EndIf
	\EndFor
\end{algorithmic}
The Accept-Reject Lemma guarantees that the returned values $(T, X)$ will be the first arrival of an exponential race with measure $P$, and Theorem \ref{thm:er} guarantees that $X$ is a sample from $P(\cdot)/P(\R^n)$. This is the basic flavour of the arguments of this section.

\begin{figure}[t]
\begin{center}
\includegraphics{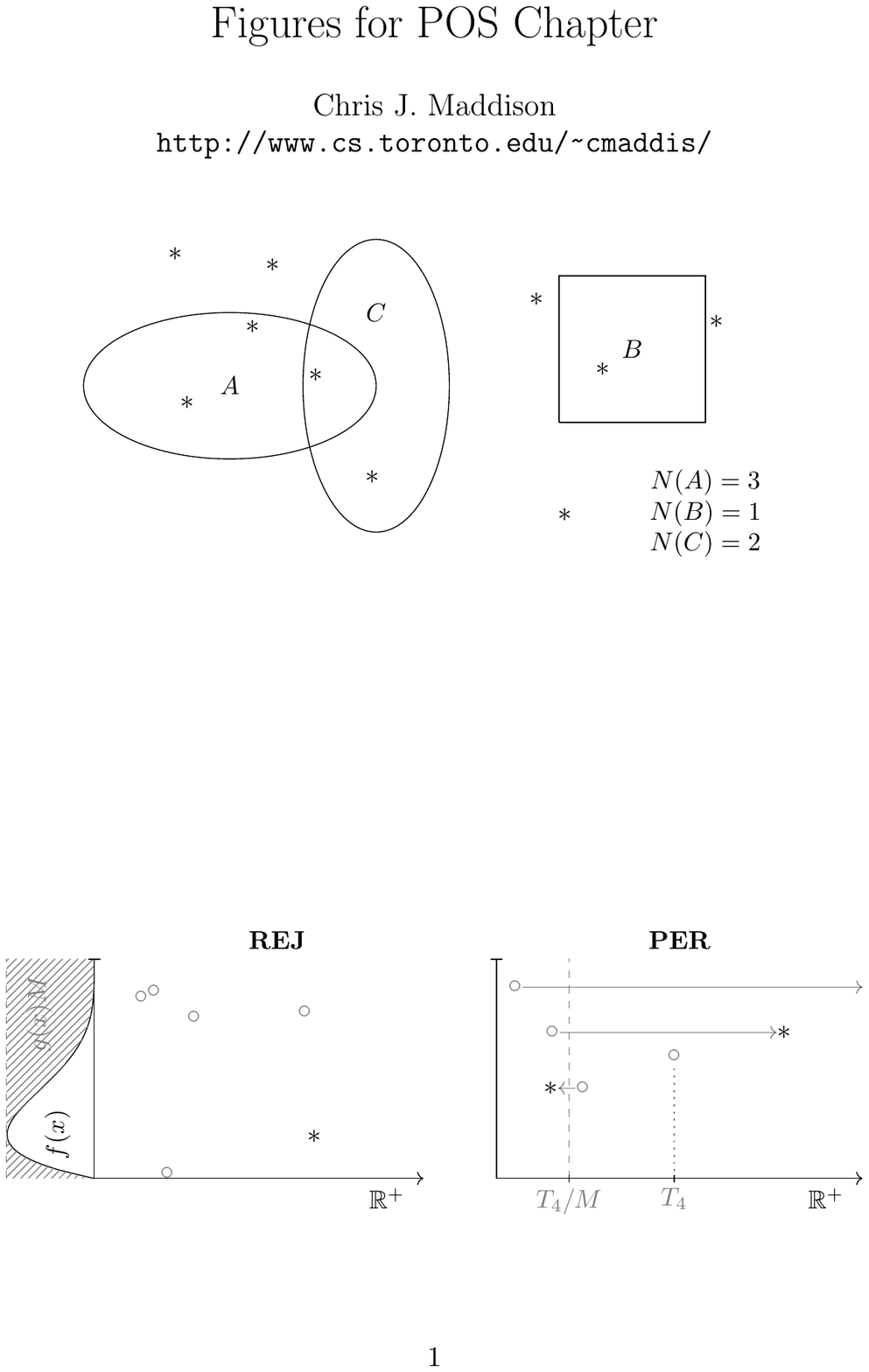}
\caption{Algorithms $\rej$ and $\per$ for measure $P$ on $[0,1]$ with proposal measure $Q$. The densities of $Q$ and $P$ are shown on the left hand side as densities over $x \in [0, 1]$. $\circ$ are arrivals of the race with measure $Q$, $\ast$ of the race with measure $P$. Both plots show the proposals considered until the first acceptance. For $\per$ opaque solid lines represent the perturb operation. $T_4$ is the fourth arrival from the race with measure $Q$. $T_4/M$ is the lower bound on all future arrivals, and thus all $\ast$ points to the left of $T_4/M$ are in order.}
\label{fig:alg}
\end{center}
\end{figure}

The Perturbation Lemma has a corresponding procedure, which uses the bound $M$ to provably return the first arrival of a perturbed exponential race. It is shown on the right hand side of \Fig{fig:alg}, and we call it $\per$.

\begin{algorithmic}
	\State $(T^*, X^*) = (\infty, \mathrm{null})$
	\For{$(T_i, X_i) \in R$ simulated by \Algo{alg:er} with measure $Q$}
		\If{$T^* > T_i g(X_i)/f(X_i)$}
			\State $T^* = T_ig(X_i)/f(X_i)$
			\State $X^* = X_i$
		\EndIf
		\If{$T_{i+1}/M \geq T^*$}
			\Return $(T^* , X^*)$
		\EndIf
	\EndFor
\end{algorithmic}

In this procedure $(T_i, X_i)$ iterates in order through the arrivals of an exponential race with measure $Q$. The perturbed times $T_ig(X_i)/f(X_i)$ will form a race with measure $P$, but not necessarily in order. $(T^*, X^*)$ are variables that track the earliest perturbed arrival so far, so $T^*$ is an upper bound on the eventual first arrival time for the race with measure $P$. $T_{i+1}$ is the arrival time of the next point in the race with measure $Q$ and $M$ bounds the contribution of the perturbation, so $T_{i+1}/M$ is a lower bound on the remaining perturbed arrivals. When $T^*$ and $T_{i+1}/M$ cross, $(T^*, X^*)$ is guaranteed to be the first arrival of the perturbed race.

$\rej$ and $\per$ can turned into generators for iterating through all of the arrivals of an exponential race with measure $P$ as opposed to just returning the first. For $\rej$ it is as simple as replacing {\bf return} with {\bf yield}, so that each time the generator is invoked it searches until the next acceptance and returns. For $\per$ we must store every perturbed arrival until its eventual order in the race with measure $P$ is determined. This can be accomplished with a priority queue $\Uc$, which prioritizes by earliest arrival time,
\begin{algorithmic}
	\State $\Uc = \mathrm{minPriorityQueue}()$
	\For{$(T_i, X_i) \in R$ simulated by \Algo{alg:er} with measure $Q$}
		\State $\Uc.\mathrm{pushWithPriortiy}(T_i g(X_i)/f(X_i), X_i)$
		\If{$T_{i+1}/M \geq \min \Uc$}
			{\bf yield} $\Uc.\mathrm{pop}()$
		\EndIf
	\EndFor
\end{algorithmic}
$\Uc$ takes the place of $T^*$ and $X^*$ in $\per$. The highest priority arrival on $\Uc$ will be the earliest of the unordered perturbed arrivals and $T_{i+1}/M$ is a lower bound on all future perturbed arrivals. When $T_{i+1}/M \geq \min \Uc$, the earliest arrival on $\Uc$ is guaranteed to be the next arrival. It is informative to think of the generator version of $\per$ via Figure \ref{fig:alg}. The lower bound $T_{i+1}/M$ is a bound across space that advances rightward in time, every arrival to the left of $T_{i+1}/M$ is in order and every arrival to the right is unordered.

Consider the number of iterations until the first acceptance in $\rej$ and $\per$. At first it seems that both algorithms should have different runtimes. $\rej$ is obviously memoryless, and it seems wasteful --- no information accumulates. On the other hand $\per$ accumulates the earliest arrival and its termination condition depends on a history of arrivals. Unfortunately, both algorithms have the same geometric distribution over the number of arrivals considered. Arguing informally, the lower bound $T_{i+1}/M$ of $\per$ plotted over the iterations will form a line with slope $(MQ(\R^n))^{-1}$. $\per$ terminates when this line crosses the first arrival time of the perturbed race. The first arrival of a race with measure $P$ occurs at $P(\R^n)^{-1}$ in expectation, so we expect the crossing point to occur on average at $MQ(\R^n)/P(\R^n)$ iterations. This is the same as the expected runtime of $\rej$.

\begin{lemma}
\label{lem:constantbound}
Let $K(\rej)$ and $K(\per)$ be the number of proposals considered by the rejection and perturbation sampling algorithms. Then
\begin{align*}
\proba(K(\rej) > k) = \proba(K(\per)> k) =  (1 - \rho)^{k} \text{ with } \rho = \frac{P(\R^n)}{Q(\R^n)M}.
\end{align*}
Thus $K(\rej)$ and $K(\per)$ are geometric random variable with
\begin{align*}
\expectmad(K(\rej)) = \expectmad(K(\per)) = \frac{1}{\rho}
\end{align*}
\end{lemma}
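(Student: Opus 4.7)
My plan is to treat the two algorithms separately, but both analyses revolve around the ratio $Y(x) := f(x)/(g(x)M) \in (0,1]$, whose expectation under $Q(\cdot)/Q(\R^n)$ is
\[
\expectmad[Y(X)] = \int \frac{f(x)}{g(x)M}\cdot\frac{g(x)}{Q(\R^n)}\,\mu(dx) = \frac{P(\R^n)}{MQ(\R^n)} = \rho.
\]
For $\rej$, Theorem~\ref{thm:er} applied to the $MQ$-race gives iid locations $X_i \sim Q(\cdot)/Q(\R^n)$, and the $U_i$ are freshly drawn uniforms independent of the race. Thus the acceptance indicators $\mathbf{1}\{U_i < Y(X_i)\}$ are iid $\Bernoulli(\rho)$, and $K(\rej)$ is geometric with parameter $\rho$ by an elementary argument.

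For $\per$ I would first collapse the tail event to a single inequality. Since $T_{i+1}$ is strictly increasing in $i$ while $T^*_i := \min_{j \leq i} T_j g(X_j)/f(X_j)$ is non-increasing in $i$, the termination test $T_{i+1}/M \geq T^*_i$ holds at $i=k$ whenever it holds at any earlier $i \leq k$; contrapositively,
\[
\{K(\per) > k\} = \{T_{k+1}/M < T^*_k\} = \bigcap_{j=1}^k \{Y(X_j) < T_j/T_{k+1}\}.
\]
Writing $Y_j := Y(X_j)$, the $Y_j$ are iid and, by Theorem~\ref{thm:er}, jointly independent of the arrival times $(T_i)$.

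The key step is integrating out the times. Conditional on $T_{k+1}$, the ratios $(T_j/T_{k+1})_{j=1}^k$ are distributed as the order statistics of $k$ iid $\Uniform[0,1]$ variables---the standard representation of a homogeneous Poisson process on $\R^+$. Because $\prod_{j=1}^k F_Y(u_j)$ is symmetric in its arguments, the order statistics may be replaced by iid uniforms $U_1,\ldots,U_k$, giving
\[
\proba(K(\per) > k) = \expectmad\!\left[\prod_{j=1}^k F_Y\!\bigl(T_j/T_{k+1}\bigr)\right] = \bigl(\expectmad[F_Y(U)]\bigr)^k,
\]
where $U \sim \Uniform[0,1]$ and $F_Y$ is the distribution function of $Y_1$. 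Since $Y \in [0,1]$, Fubini gives $\expectmad[F_Y(U)] = \int_0^1 F_Y(u)\,du = 1 - \expectmad[Y] = 1-\rho$, so $\proba(K(\per) > k) = (1-\rho)^k$. The main obstacle is the monotonicity reduction together with recognising the conditional uniform-order-statistics structure of the ratios; once those are in hand the calculation collapses to a one-line integral that transparently equates the two algorithms' runtime distributions despite their very different operational structure.
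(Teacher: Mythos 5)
Your proof is correct and takes essentially the same route as the paper: the identical $\Bernoulli(\rho)$ argument for $\rej$, and for $\per$ the same reduction of $\{K(\per)>k\}$ to the event that $T_j g(X_j)/f(X_j) > T_{k+1}/M$ for all $j \leq k$, followed by conditioning on $T_{k+1}$ so that the earlier arrival times become (order statistics of) i.i.d.\ uniforms on $(0,T_{k+1})$. Your explicit monotonicity justification and the symmetry/Fubini step $\int_0^1 F_Y(u)\,du = 1-\expectmad[Y] = 1-\rho$ are just a more careful restatement of the paper's computation $\proba(U > f(X)/(g(X)M)) = 1-\rho$.
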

\begin{proof}
The probability of accepting a proposal at any iteration of $\rej$ is
\begin{align*}
\expectmad(f(X_i)/(g(X_i)M)) = \int \frac{f(x)}{g(x)M} \frac{g(x)}{Q(\R^n)} \mu(dx) =  \rho.
\end{align*}
Each decision is independent, so the probability of $k$ rejections is $(1 - \rho)^k$.

$\per$ exceeds $k$ iterations if $T_i g(X_i)/f(X_i) > T_{k+1}/ M$ for all $i \leq k$. Because the $X_i$ are i.i.d.,
\begin{align*}
\proba(K(\per)> k \ &| \  \{T_i = t_i\}_{i=1}^{k+1}) = \prod_{i=1}^k  \proba(t_i/t_{k+1} > f(X)/(g(X)M)),
\end{align*}
where $X\sim Q(\cdot)/Q(\R^n)$. Given $T_{k+1} = t_{k+1}$ the $T_i$ for $i \leq k$ are i.i.d. $T_i \sim \Uniform(0, t_{k+1})$ by Lemma \ref{lem:bernoulli}. Thus $T_i/T_{k+1} \sim \Uniform(0, 1)$ i.i.d.
\begin{align*}
\proba(K(\per)> k) = \prod_{i=1}^k  \proba(U > f(X)/(g(X)M)) = (1- \rho)^k
\end{align*}
finishes the proof.
\end{proof}

\subsection{Adaptive bounds}
\label{subsec:adaptivebounds}
Lemma \ref{lem:constantbound} is disappointing, because it suggests that reasoning about perturbations is as inefficient as discarding proposals. The problem is fundamentally that information carried in the bound $M$ about the discrepancy between $g(x)$ and $f(x)$ is static throughout the execution of both algorithms. Considering a contrived scenario will illustrate this point. Suppose that for every failed proposal we are given a tighter bound $M_{i+1} < M_{i}$ from some oracle. Both $\rej$ and $\per$ can be adapted to take advantage of these adaptive bounds simply by dropping in $M_i$ wherever $M$ appears.

In this case $\per$ is distinguished from $\rej$. $\rej$ makes an irrevocable decision at each iteration. In contrast $\per$ simply pushes up the lower bound $T_{i+1}/M_i$ without erasing its memory, bringing it closer to accepting the earliest arrival so far. Indeed, the probability of this oracle rejection sampling exceeding $k$ proposals is
\begin{align*}
\proba(K(\mathbf{OREJ}) > k) = \prod_{i}^k (1 - \rho_i) \text{ where } \rho_i = P(\R^n)/(Q(R^n)M_i).
\end{align*}
On the other hand, the probability of this oracle perturbation sampling exceeding $k$ proposals is
\begin{align*}
\proba(K(\mathbf{OPER}) > k) = \prod_{i=1}^k  \proba(U > f(X)/(g(X)M_k)) = (1 - \rho_k)^k,
\end{align*}
or the probability of rejecting $k$ proposals \emph{as if} the $M_k$th bound was known all along. By tracking the earliest arrival so far $\mathbf{OPER}$ makes efficient use of adaptive bound information, reevaluating all points in constant time.

\subsection{OS* adaptive rejection sampling and A* sampling}

The difference between $\rej$ and $\per$ exposed by considering adaptive bounds motivates studying OS* and A* sampling, Monte Carlo methods that use realistic adaptive bounds. Both methods iteratively refine a partition $\{B_i\}_{i=1}^m$ of $\R^n$, which allows them to use regional bounds $M(B_i)$, where $f(x)/g(x) \leq M(B_i)$ for $x \in B_i$. As with $\rej$ and $\per$, OS* and A* sampling are only distinguished by how they use this information. OS* reasons about accept-reject operations, A* sampling about perturb operations. In contrast to the relationship between $\rej$ and $\per$, A* sampling makes more efficient use of proposal samples than OS*.

OS* and A* sampling must compute volumes and samples of subsets under the proposal measure $Q$. It will be possibly intractable to consider any possible $B_i \subseteq \R^n$, so a user must implicitly specify a nice family $\Fc$ of subsets that is closed under a user-specified refinement function $\mathrm{split}(B, x)$. Hyperrectangles are a simple example. All together, the user must provide,
\begin{enumerate}
\item finite nonzero measure $P$ with a method for computing the density $f(x)$.
\item finite nonzero proposal measure $Q$ with methods for sampling restricted to $B \in \Fc$, computing measures of $B \in \Fc$, and computing the density $g(x)$.
\item partitioning set function $\mathrm{split}(B, x) \subseteq \Fc$ for $B \in \Fc$ that partitions $B$.
\item bounding set function $M(B)$ for $B \in \Fc$, $f(x)/g(x) \leq M(B)$ for $x \in B$.
\end{enumerate}
Specific examples, which correspond to experimental examples, are given in the Appendix.

\begin{algorithm}[t]
\caption{$\osstar$ adaptive rejection sampling for $P$ with proposal $Q$} \label{alg:osstar}
\begin{algorithmic}
	\State $\Pc_0 = \{\R^n\}$
	\State $T_0 = 0$
	\For{$i=1$ to $\infty$}
		\State $B_i \sim \proba(B) \propto Q(B)M(B)$ for $B \in \Pc_{i-1}$
		\State $X_{i} \sim Q(\cdot \cap B_i)/Q(B_i)$
		\State $E \sim \Exponential(\sum_{B \in \Pc_{i-1}} M(B)Q(B))$
		\State $T_i = T_{i-1} + E$
		\State $U_i \sim \Uniform[0,1]$
		\If{$U_i < f(X_i)/(g(X_i)M(B_i))$}
			\State\Return $(T_{i}, X_i)$
		\Else
			\State $\Cc = \mathrm{split}(B_i, X_i)$
			\State $\Pc_{i}  = \Pc_{i-1} - \{B_i\} + \Cc$
		\EndIf
	\EndFor
\end{algorithmic}
\end{algorithm}

OS* ($\osstar$ for short) is in a family of adaptive rejection sampling algorithms, which use the history of rejected proposals to tighten the gap between the proposal density and the density of interest. The name adaptive rejection sampling (ARS) is normally reserved for a variant that assumes $\log f(x)$ is concave \citep{gilks1992adaptive}. Accept-reject decisions are independent, so any adaptive scheme is valid as long as the rejection rate is not growing too quickly \citep{casella}. Our proof of the correctness appeals to exponential races, and it works for a wider range of adaptive schemes than just $\osstar$.

In more detail, $\osstar$ begins with the proposal density $g(x)$ and a partition $\Pc_{0} = \{\R^n\}$. At every iteration it samples from the distribution with density proportional to $\sum_{B \in \Pc_{i-1}} g(x)M(B)1_{B}(x)$ in a two step procedure, sampling a subset $B \in \Pc_{i-1}$ with probability proportional to $Q(B)M(B)$, and then sampling a proposal point $X$ from the distribution with density $g(x)$ restricted to $B$. If $X$ is rejected under the current proposal, then $P_{i-1}$ is refined by splitting $B$ with the user specified $\mathrm{split}(B, X)$. There is a choice of when to refine and which subset $B \in \Pc_{i-1}$ to refine, but for simplicity we consider just the form the splits the subset of the current proposal. $\osstar$ continues until the first acceptance, see \Algo{alg:osstar}.

\begin{theorem}[Correctness of OS*]
\label{thm:osstar}
Let $K(\osstar)$ be the number of proposal samples considered before termination. Then
\begin{align*}
\proba(K(\osstar) > k) \leq (1-\rho)^k \text{ where } \rho = \frac{P(\R^n)}{Q(\R^n)M(\R^n)}
\end{align*}
and upon termination the return values $(T, X)$ of OS* are independent and
\begin{align*}
T \sim \Exponential(P(\R^n)) \quad X \sim \frac{P(\cdot)}{P(\R^n)}.
\end{align*}
\end{theorem}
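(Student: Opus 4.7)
The plan is to show that the accepted proposals of $\osstar$ are distributed as the first arrival of an exponential race with measure $P$, from which Theorem~\ref{thm:er} supplies the stated distribution. The whole argument turns on one cancellation. Writing $M_\Pc(x) = M(B)$ for the cell $B \in \Pc$ containing $x$ and $r_\Pc = \sum_{B \in \Pc} M(B)Q(B)$, the joint density at iteration $i$ of $(E_i, X_i)$ together with the acceptance event (with respect to $\lambda \times \mu$) is
\[
r_{\Pc_{i-1}}\, e^{-r_{\Pc_{i-1}} s} \cdot \frac{g(x)\,M_{\Pc_{i-1}}(x)}{r_{\Pc_{i-1}}} \cdot \frac{f(x)}{g(x)\,M_{\Pc_{i-1}}(x)} \;=\; f(x)\,e^{-r_{\Pc_{i-1}}s},
\]
in which the two $M_{\Pc_{i-1}}(x)$ factors cancel, leaving a form that is partition-free up to the total rate $r_{\Pc_{i-1}}$.

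For the geometric bound, integrating this density over $(s,x)$ yields $\proba(\text{accept at }i \mid \Pc_{i-1}) = P(\R^n)/r_{\Pc_{i-1}}$. Because $\mathrm{split}$ only refines to cells $B$ with $M(B) \le M(\R^n)$, one has $r_{\Pc_{i-1}} \le M(\R^n)Q(\R^n)$, hence the acceptance probability is at least $\rho$, and independence of the fresh $U_i$ across iterations gives $\proba(K(\osstar) > k) \le (1-\rho)^k$.

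For the joint law of $(T,X)$ I would guess-and-verify that $\Psi_\Pc(u, A) := \proba(T > u,\, X \in A \mid \text{starting state } \Pc) = e^{-P(\R^n)u}\,P(A)/P(\R^n)$ for every reachable $\Pc$; this single identity delivers both marginals and their independence at once. Conditioning on the outcome of iteration~1 produces three contributions to $\Psi_\Pc(u,A)$: (A) accept at 1 with $E_1 > u$; (B) reject at 1 with $E_1 = s \le u$ and continue from the refined partition; (C) reject at 1 with $E_1 > u$ and continue. Plugging in the conjectured fixed point, using $\int(g(x)M_\Pc(x)-f(x))\mu(dx) = r_\Pc - P(\R^n)$, and then the algebraic identity
\[
\frac{1}{r_\Pc} - \frac{1}{P(\R^n)} + \frac{r_\Pc - P(\R^n)}{r_\Pc\, P(\R^n)} = 0
\]
makes every $e^{-r_\Pc u}$ contribution cancel, leaving exactly $P(A)\,e^{-P(\R^n)u}/P(\R^n)$. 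The geometric bound from the previous step ensures the fixed-point iteration is contracting, so this solution is unique.

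The main obstacle will be the easily overlooked case (C): without the ``reject-but-past-$u$'' contribution, the $r_\Pc$-dependent terms fail to cancel and one mistakenly concludes that $T$ and $X$ are not independent. A shorter though more machinery-heavy route worth noting is to identify the $\mathcal{F}_t$-predictable intensity of accepted arrivals in a set $A \subseteq \R^n$: by the same cancellation it equals $r_{\Pc_t} \cdot P(A)/r_{\Pc_t} = P(A)$, deterministic and partition-free, so by the martingale characterisation of Poisson processes the accepted arrivals form an exponential race with measure $P$ on $\R^+ \times \R^n$, and the claim is immediate from Theorem~\ref{thm:er}.
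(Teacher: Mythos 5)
Your argument is correct, but it is a genuinely different route from the paper's. The paper proves this theorem by a coupling: it realizes a single exponential race $R$ with measure $M(\R^n)Q(\cdot)$ together with i.i.d.\ uniforms, invokes an adaptive thinning result (Proposition 14.7.I of Daley and Vere--Jones) to argue the $\osstar$ proposals have the same law as an adaptively subsampled subset of $R$, observes that the doubly-thinned points coincide with directly thinning $R$ with probability $f(x)/(g(x)M(\R^n))$ so the Accept-Reject Lemma applies, and gets the tail bound by noting that $\rej$ run on the same realization of $R$ considers at least as many points as $\osstar$. You instead do a self-contained renewal-style computation: the per-iteration cancellation of $M_{\Pc_{i-1}}(x)$ gives the conditional acceptance kernel $f(x)e^{-r_{\Pc_{i-1}}s}$, hence conditional acceptance probability $P(\R^n)/r_{\Pc_{i-1}} \ge \rho$ (which actually quantifies how refinement helps, something the paper's comparison-with-$\rej$ bound hides), and the joint law of $(T,X)$ comes from a one-step conditioning recursion whose fixed point is the claimed exponential-times-$P$ law, with uniqueness from the $(1-\rho)$ contraction. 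What the paper's coupling buys is that the \emph{entire} sequence of accepted points is exhibited as an exponential race with measure $P$ (which is what licenses the generator version of the algorithm), at the cost of an informal appeal to point-process machinery; your route is more elementary and fully verifiable at the stated level of rigor, but as written delivers only the first arrival, and the contraction/uniqueness step, while correct, deserves a sentence making explicit that the acceptance branch cancels in the difference equation so $\sup_{\Pc,u,A}|\Delta_\Pc(u,A)| \le (1-\rho)\sup|\Delta|$. Note also that both you and the paper implicitly assume the user-supplied bounds satisfy $M(B) \le M(\R^n)$ on refined cells (the paper needs it for $g_i(x) \le g(x)M(\R^n)$, you for $r_{\Pc} \le Q(\R^n)M(\R^n)$); this is not forced by the stated interface, so it is a shared, mild hypothesis rather than a gap in your argument. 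Your closing remark about the predictable-intensity/martingale characterization is a third viable route, closer in spirit to the paper's point-process view, but it would recover the full race rather than just the first arrival.
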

\begin{proof}
The situation is complicated, because the proposals $\{(T_i, X_i)\}_{i=1}^{\infty}$ of $\osstar$ are not an exponential race. Instead, we present an informal argument derived from a more general thinning theorem, Proposition 14.7.I. in \cite{daley2007introduction}. Let $g_i(x)$ be the proposal density at iteration $i$,
\begin{align*}
g_i(x) = \sum\nolimits_{B \in \Pc_{i-1}} g(x)M(B)1_B(x).
\end{align*}
Clearly, $g_i(x)$ depends on the history of proposals so far and $f(x) \leq g_i(x) \leq g(x)M(\R^n)$ for all $i$. Let $R$ be an exponential race with measure $M(\R^n)Q(\cdot)$ and $U_j \Uniform[0,1]$ i.i.d. for each $(T_j, X_j) \in R$. Consider the following adaptive thinning procedure, subsample all points of $R$ that satisfy $U_j \leq g_i(X_j)/(g(X_j) M(\R^n))$ where $g_i(X_j)$ is defined according to the refinement scheme in $\osstar$, but relative to the history of \emph{points subsampled from $R$ in the order of their acceptance}. It is possible to show that the sequence of accepted points $\{(T_i, X_i, U_i)\}_{i=1}^{\infty}$ have the same marginal distribution as the sequence of proposals in $\osstar$. 

Thus, we can see $\osstar$ and $\rej$ as two separate procedures on the same realization of $R$. For the termination result, notice that $\rej$ considers at least as many points as $\osstar$.  For partial correctness, the points $(T_i, X_i, U_i)$ such that $U_i < f(X_i)/g_i(X_i)$ are exactly the subsampled points that would have resulted from thinning $R$ directly with probability $f(x)/(g(x)M(\R^n))$. Thus, by the Accept-Reject Lemma, the returned values $(T, X)$ will be the first arrival of an exponential race with measure $P$.
\end{proof}

A* sampling ($\astar$ for short) is a branch and bound routine that finds the first arrival of a perturbed exponential race. It follows $\per$ in principle by maintaining a lower bound on all future perturbed arrivals. The difference is that $\astar$ maintains a piecewise constant lower bound over a partition of space that it progressively refines. On every iteration it selects the subset with smallest lower bound, samples the next arrival in that subset, and refines the subset unless it can terminate. It continues refining until the earliest perturbed arrival is less than the minimum of the piecewise constant lower bound. The name A* sampling is a reference to A* search \citep{astar}, which is a path finding algorithm on graphs that uses a best-first criteria for selecting from heuristically valued nodes on the fringe of a set of visited nodes. A* sampling was originally introduced by \cite{maddison2014astarsamp} as an algorithm that maximizes a perturbed Gumbel process. We define it over an exponential race for the sake of consistency. Usually, it is better to work with a Gumbel process to avoid numerical issues.

In more detail, $\astar$ searches over a simulation of an exponential race organized into a space partitioning tree, as in the right hand plot of Figure \ref{fig:eralt}, for the first arrival of the perturbed race. The tree is determined by the splitting function $\mathrm{split}(B, x)$. Each node $v$ of the tree is associated with a subset $B_v \subseteq \R^n$ and an arrival $(T_v, X_v)$ from an exponential race with measure $Q$. 
$\astar$ iteratively expands a subtree of internal visited nodes, taking and visiting one node from the current fringe at each iteration. The fringe $\Lc$ of the visited subtree is always a partition of $\R^n$. Each subset $B \in \Lc$ is associated with the arrival time $T$ of the next arrival of the race with measure $Q$ in $B$. Therefore $T/M(B)$ is a lower bound on all future perturbed arrivals in $B$. $\Lc$ is implemented with a priority queue that prioritizes the subset $B$ with the lowest regional bound $T/M(B)$. As $\astar$ expands the set of visited nodes the lower bound $\min \Lc$ increases.

\begin{algorithm}[t]
\label{alg:astar}
\caption{A* sampling for $P$ with proposal $Q$} \label{alg:astar}
\begin{algorithmic}
	\State $\Lc, \Uc = \mathrm{minPriorityQueue}(), \mathrm{minPriorityQueue}()$ 
	\State $T_1 \sim \Exponential(Q(\R^n))$
	\State $\Lc.\mathrm{pushWithPriority}(T_1/M(\R^n), \R^n)$
	\For{$i=1$ to $\infty$}
		\State $(T_i/M(B_i), B_i) = \Lc.\mathrm{pop}()$
		\State $X_i \sim Q(\cdot \cap B)/Q(B_i)$
		\State $\Uc.\mathrm{pushWithPriority}(T_ig(X_i)/f(X_i), X_i)$
		\State $E \sim \Exponential(Q(B_i))$
		\State $T = T_i + E$
		\If{$\min(\min \Lc, T/M(B_i)) < \min \Uc$}
			\State $\Cc = \mathrm{split}(B_i, X_i)$
			\While{$\Cc \neq \emptyset$}
				\State $C \sim \proba(C) \propto Q(C)$ for $C \in \Cc$
	    			\State $\Lc.\mathrm{pushWithPriority}(T/M(C), C)$
				\State $\Cc = \Cc - \{C\}$
    				\State $E \sim \Exponential(\sum_{C \in \Cc} Q(C))$
    				\State $T = T + E$
			\EndWhile
		\Else
			\State $\Lc.\mathrm{pushWithPriority}(T/M(B_i), B_i)$
		\EndIf
		\If{$\min \Lc \geq \min \Uc$}
			\State\Return $\Uc.\mathrm{pop}()$
		\EndIf
	\EndFor
\end{algorithmic}
\end{algorithm}

$\Lc$ is initialized with the root of the tree $\{(T_1/M(\R^n), \R^n)\}$. At the start of an iteration $\astar$ removes and visits the subset $(T_i/M(B_i), B_i)$ with lowest lower bound on $\Lc$. Visiting a subset begins by realizing a location $X_i$ from $Q(\cdot \cap B_i)/Q(B_i)$ and pushing the perturbed arrival $(T_i g(X_i)/f(X_i), X_i)$ onto another priority queue $\Uc$. $\Uc$ prioritizes earlier arrivals by the perturbed arrival times $T_ig(X_i)/f(X_i)$. In this way $\astar$ decreases the upper bound $\min \Uc$ at each iteration.

$\astar$ attempts to terminate by simulating the next arrival time $T > T_i$ in $B_i$ of the race with measure $Q$. If $\min \Uc \leq \min(\min \Lc, T/M(B_i))$, then the top of $\Uc$ will not be superseded by future perturbed arrivals and it will be the first arrival of the perturbed race. If termination fails, $\astar$ refines the the partition by splitting $B_i$ into a partition $\mathrm{split}(B_i, X_i)$ of children. Arrival times for each of the children are assigned respecting the constraints of the exponential race in $B_i$. Each child $C$ is pushed onto $\Lc$ prioritized by its lower bound $T/M(C)$. Because the lower bounds have increased there is a second opportunity to terminate before continuing. $\astar$ checks if $\min \Uc \leq \min \Lc$, and otherwise continues, see \Algo{alg:astar}. As with $\per$, $\astar$ can be turned into a generator for iterating in order through the points of the perturbed race by replacing the {\bf return} statement with a {\bf yield} statement in \Algo{alg:astar}.

\begin{theorem}[Correctness of A* sampling]
\label{thm:astar}
Let $K(\astar)$ be the number of proposal samples considered before termination. Then
\begin{align*}
\proba(K(\astar) > k) \leq (1-\rho)^k \text{ where } \rho = \frac{P(\R^n)}{Q(\R^n)M(\R^n)}
\end{align*}
and upon termination the return values $(T, X)$ of A* sampling are independent and
\begin{align*}
T \sim \Exponential(P(\R^n)) \quad X \sim \frac{P(\cdot)}{P(\R^n)}.
\end{align*}
\end{theorem}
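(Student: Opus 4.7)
The plan is to proceed in close parallel with the proof of Theorem \ref{thm:osstar}, but using the Perturbation Lemma instead of the Accept-Reject Lemma and an additional argument to certify the termination condition. First I would identify what distribution governs the arrivals $(T_i, X_i)$ that $\astar$ realizes throughout its execution. $\astar$ lazily expands a space partitioning tree of the type described in Section \ref{subsec:ersim}: at each visited node $v$ with subset $B_v$, a location $X_v \sim Q(\cdot \cap B_v)/Q(B_v)$ is drawn, and the arrival time is obtained from the parent's time by adding an independent $\Exponential(Q(B_v))$ (the inner \textbf{while} loop in \Algo{alg:astar} that sequentially samples children with $\Exponential(\sum_{C \in \Cc} Q(C))$ increments is exactly the splitting-by-children construction of Section \ref{subsec:ersim}). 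Consequently the full (infinite) collection of arrivals generated over the tree has the same joint distribution as an exponential race $R$ with measure $Q$; the best-first order in which $\astar$ reveals them only affects which ones it has observed by iteration $k$, not their joint law.

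Next I would establish partial correctness. By the Perturbation Lemma, $\shear(R) = \{(T_i g(X_i)/f(X_i), X_i)\}$ sorted by time is an exponential race with measure $P$, and Theorem \ref{thm:er} gives that its first point $(T,X)$ satisfies $T \sim \Exponential(P(\R^n))$ independently of $X \sim P(\cdot)/P(\R^n)$. It remains to argue that the pair returned by $\astar$ coincides with this first point. The key invariant is: for any $B \in \Lc$ with queued priority $T/M(B)$, every arrival $(t,x) \in R$ with $x \in B$ and $t \geq T$ satisfies
\begin{align*}
\frac{t\, g(x)}{f(x)} \;\geq\; \frac{t}{M(B)} \;\geq\; \frac{T}{M(B)},
\end{align*}
because $f(x)/g(x) \leq M(B)$ on $B$. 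Thus $\min \Lc$ is a valid lower bound on every perturbed arrival not already pushed onto $\Uc$, and whenever the termination test $\min \Lc \geq \min \Uc$ fires, the top of $\Uc$ is provably the minimum of $\shear(R)$.

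For the geometric tail bound on $K(\astar)$ I would couple $\astar$ with the simpler algorithm $\per$ driven by the same realization of $R$ but using the single global bound $M(\R^n)$. Because $M(B) \leq M(\R^n)$ for every $B \in \Fc$ (as $B \subseteq \R^n$), the regional lower bounds employed by $\astar$ dominate the coarse bound $T_{i+1}/M(\R^n)$ used by $\per$ at every step of the coupled execution, while both algorithms record their running minimum of perturbed arrivals in the same way. Hence the termination condition for $\astar$ fires no later than that of $\per$, so $K(\astar) \leq K(\per)$ in the coupled sense, and Lemma \ref{lem:constantbound} transfers the bound $(1-\rho)^k$ with $\rho = P(\R^n)/(Q(\R^n)M(\R^n))$.

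The main obstacle is making the coupling in step three rigorous. $\astar$ visits subsets in best-first order, which reorders arrivals relative to the time-ordered enumeration used by $\per$; the realization of $R$ that $\astar$ observes by iteration $k$ is not the same as the time-prefix observed by $\per$ by iteration $k$. The cleanest fix is to define the coupling at the level of the full tree-structured race and then compare the two termination criteria as predicates on the set of arrivals revealed so far, showing that $\astar$'s criterion is the \emph{weaker} of the two when regional bounds are used in place of $M(\R^n)$; this is analogous to the appeal to Proposition 14.7.I of \cite{daley2007introduction} in the OS* proof, and I would invoke it similarly rather than re-deriving the dependence of the proposal sequence on its history.
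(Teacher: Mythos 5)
Your proposal matches the paper's proof in essentially every respect: the arrivals over the full space-partitioning tree form an exponential race with measure $Q$, partial correctness follows from the Perturbation Lemma together with the observation that $\min \Lc$ lower-bounds every unrealized perturbed arrival, and the runtime bound comes from comparing against $\per$ run with the single global bound $M(\R^n)$ (which is exactly $\astar$ with constant bounds, searched in time order) and invoking Lemma \ref{lem:constantbound}. The only cosmetic difference is your closing suggestion to lean on an analogue of Proposition 14.7.I of \cite{daley2007introduction}; the paper needs that device only for OS*, since for $\astar$ there is no history-dependent thinning and the best-first revelation order does not alter the joint law of the tree-structured race, which is precisely the coupling you describe.
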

\begin{proof}
Adapted from \cite{maddison2014astarsamp}. The proposals are generated lazily in a space partitioning tree. If $\{(T_i, X_i)\}_{i=1}^{\infty}$ are the arrivals at every node of the infinite tree sorted by increasing $T_i$, then $(T_i, X_i)$ forms an exponential race with measure $Q$.

For the termination result, each node $v$ of the tree can be associated with a subset $B_v$ and a lower bound $T_v/M(B_v)$. One of the nodes will contain the first arrival of the perturbed process with arrival time $T^*$. $\astar$ visits at least every node $v$ with $T_v/M(B_v) > T^*$. If $M(B)$ is replaced with a constant $M(\R^n)$, then this can only increase the number of visited nodes. The last step is to realize that $\astar$ searching over a tree with constant bounds $M(\R^n)$ searches in order of increasing $T_v$, and so corresponds to a realization of $\per$. The distribution of runtimes of $\per$ is given in Lemma \ref{lem:constantbound}.

For partial correctness, let $(T, X)$ be the return values with highest priority on the upper bound priority queue $\Uc$. The arrival time of unrealized perturbed arrivals is bounded by the lower bound priority queue $\Lc$. At termination $T$ is less than the top of the lower bound priority queue. So no unrealized points will arrive before $(T, X)$. By Lemma \ref{lem:perturb} $(T, X)$ is the first arrival of an exponential race with measure $P$.
\end{proof}

\begin{table}[t]
\begin{center}
\begin{tabular}{llllrr}
\toprule
$P$ & $Q$ & $\Omega$ & $N$ & $\bar{K}(\osstar)$ & $\bar{K}(\astar)$  \\
\midrule
clutter posterior & prior & $\R$ & 6 & 9.34 &7.56 \\
clutter posterior & prior & $\R^2$ & 6 & 38.3 & 33.0  \\
clutter posterior & prior & $\R^3$ & 6 & 130 & 115 \\
robust Bayesian regression & prior & $\R$ & 10  & 9.36 & 6.77\\
robust Bayesian regression & prior & $\R$ & 100 & 40.6 & 32.2 \\
robust Bayesian regression & prior & $\R$ & 1000 & 180 & 152 \\
fully connected Ising model & uniform & $\{-1,1\}^{5}$ &  - & 4.37 & 3.50 \\
fully connected Ising model & uniform & $\{-1,1\}^{10}$ & - & 19.8 & 15.8 \\
\midrule
\end{tabular}
\end{center}
\caption{Comparing $\astar$ and $\osstar$. Clutter and robust Bayesian regression are adapted from \cite{maddison2014astarsamp} and the Ising model from \cite{kim2016lprelaxsamp}. $\Omega$ is the support of the distribution; $N$ is the number of data points; and $\bar{K}(\osstar)$ and $\bar{K}(\astar)$ are averaged over 1000 runs. More information in the Appendix.\label{tab:expts}}.
\end{table}

\subsection{Runtime of A* sampling and OS*}

$\astar$ and $\osstar$ are structurally similar; both search over a partition of space and refine it to increase the probability of terminating. They will give practical benefits over rejection sampling if the bounds $M(B)$ shrink as the volume of $B$ shrinks. In this case the bound on the probability of rejecting $k$ proposals given in Theorems \ref{thm:osstar} and \ref{thm:astar} can be very loose, and $\osstar$ and $\astar$ can be orders of magnitude more efficient than rejection sampling. Still, these methods scale poorly with dimension.

The cost of running $\astar$ and $\osstar$ will be dominated by computing the ratio of densities $f(x)/g(x)$ and computing bounds $M(B)$. Because the number of bound computations is within a factor of 2 of the number of density computations, the number of evaluations of $f(x)/g(x)$ (equivalently number of proposals) is a good estimate of complexity. \tab{tab:expts} presents a summary of experimental evidence that $\astar$ makes more efficient use of density computations across three different problems. For each problem the full descriptions of $P$, $Q$, $M(B)$, and $\mathrm{split}(B, x)$ are found in the Appendix.

The dominance of $\astar$ in experiments is significant, because it has access to {the same information} as $\osstar$. There are at least two factors that may give $\astar$ this advantage. First, if all lower bounds increase sharply after some exploration $\astar$ can retroactively take advantage of that information, as in Section \ref{subsec:adaptivebounds}. Second, $\astar$ can take advantage of refined bound information on the priority queue $\Lc$ before proposing the next sample.
Still, the difference in search strategy and termination condition may counteract these advantages, so a rigorous theory is needed to confirm exactly the sense in which $\astar$ and $\osstar$ differ.
We refer readers to \cite{maddison2014astarsamp} for more detailed experiments.

\section{Conclusion}
\label{sec:con}

The study of Poisson processes is traditionally motivated by their application to natural phenomenon, and Monte Carlo methods are developed specifically for them \citep{ripley1977modelling, geyer1994simulation}. We considered the inverse relationship, using Poisson processes to better understand Monte Carlo methods. We suspect that this general perspective holds value for future directions in research.

Monte Carlo methods that rely on bounds are not suitable for most high dimensional distributions. Rejection sampling scales poorly with dimensionality. Even for A* sampling there are simple
examples where adaptive bounds become uninformative in high dimensions, such as sampling from the uniform hypersphere when using
hyperrectangular search subsets. Still, specialized algorithms for limited classes of distributions may be able to take advantage of conditional
independence structure to improve their scalability.

Another direction is to abandon the idea of representing arbitrary distributions, and study the class of distributions represented by the maxima of combinations of lower order Gumbel processes. This is the approach of the perturbation models studied in Papandreou and Yuille; Gane et al.; Hazan and Jaakkola; Tarlow et al.; and Keshet at al. of this book. In these models a Gumbel process over a discrete space is replaced by sums of independent Gumbel processes over discrete subspaces. The maxima of these models form a natural class of distributions complete with their own measures of uncertainty. An open direction of inquiry is developing efficient algorithms for optimizing their continuous counterparts.

Our study of Poisson processes and Monte Carlo methods was dominated by the theme of independence; the points of an exponential race arrive as independent random variables and accept-reject or perturb do not introduce correlations between the points of the transformed race. Continuing in this direction it is natural to investigate whether other Poisson process models or other operations on an exponential race could be used to define a new class of Monte Carlo methods. In a separate direction the Markov Chain Monte Carlo (MCMC) methods produce a sequence of correlated samples whose limiting distribution is the distribution of interest. The theory of point processes includes a variety of limit theorems, which describe the limiting distribution of random countable sets \citep{daley2007introduction}. It would be interesting to see whether a point process treatment of MCMC bears fruit, either in unifying our proof techniques or inspiring new algorithms.

\section*{Acknowledgements}
We would like to thank Daniel Tarlow and Tom Minka for the ideas, discussions, and support throughout this project. Thanks to the other editors Tamir Hazan and George Papandreou. Thanks to Jacob Steinhardt, Yee Whye Teh, Arnaud Doucet, Christian Robert for comments on the draft. Thanks to Sir J.F.C. Kingman for encouragement. This work was supported by the Natural Sciences and Engineering Research Council of Canada.

\section*{Appendix}
\subsection*{Proof of Lemma \ref{lem:bernoulli}}
\begin{proof}
The lemma is trivial satisfied for $k=0$. For $k > 0$ and $B_i \subseteq B$ we will express
\begin{align}
\label{eq:condoncount}
\proba(\{X_i \in B_i\}_{i=1}^k | N(B) = k)
\end{align}
in terms of counts. The difficulty lies in the possible overlap of $B_i$s, so we consider $2^k$ sets of the form
\begin{align*}
A_j = B_1^* \cap B_2^* \cap \ldots \cap B_k^*
\end{align*}
where $*$ is blank or a complement, and $A_1$ is interpreted as $B \cap B_1^c \cap \ldots \cap B_k^c$. The $A_j$ are a disjoint partition of $B$,
\begin{align*}
B_i = \cup_{j \in I(i)} A_j, \quad B = \cup_{j=1}^{2^k} A_j,
\end{align*}
where $I(i) \subseteq \{1, \ldots, 2^k\}$ is some subset of indices. Let $\Ic = I(1) \times I(2) \times \ldots \times I(k)$, so that each $s \in \Ic$ is a vector indices $(s_1, s_2, \ldots, s_k)$ associated with the disjoint events $\{X_i \in A_{s_i}\}_{i=1}^k$. Thus,
\begin{align*}
\proba(\{X_i \in B_i\}_{i=1}^k | N(B) = k) = \sum_{s \in \Ic} \proba(\{X_i \in A_{s_i}\}_{i=1}^k | N(B) = k).
\end{align*}
For $s \in \Ic$, let $n_j(s) =  \# \{i : s_i = j\}$ be the number of indices in $s$ equal to $j$ and notice that $\sum_{j=1}^{2^k} n_j(s) = k$. To relate the probability if  specific numbering $\{X_i \in A_{s_i}\}_{i=1}^k$ with counts $\{N(A_j) = n_j(s)\}_{j=1}^{2^k}$, we discount by all ways of the arranging $k$ points that result in the same counts.
\begin{align*}
\proba(\{X_i \in A_{s_i}\}_{i=1}^k | N(B) = k) &= \frac{\prod_{j=1}^{2^k}n_j(s)!}{k!}\frac{\proba(\{N(A_j) = n_j(s)\}_{j=1}^{2^k})}{\proba(N(B) = k)}\\
&= \frac{\prod_{j=1}^{2^k} \mu(A_j)^{n_j(s)} }{\mu(B)^k}.
\end{align*}
Thus (\ref{eq:condoncount}) is equal to
\begin{align*}
\sum_{s \in \Ic} \frac{\prod_{j=1}^{2^k} \mu(A_j)^{n_j(s)}}{\mu(B)^k} &= \prod_{i=1}^k \frac{\sum_{j \in I(i)} \mu(A_j)}{\mu(B)} = \prod_{i=1}^k \frac{\mu(B_i)}{\mu(B)}
\end{align*}
\end{proof}
\subsection*{Clutter posterior}
This example is taken exactly from \cite{maddison2014astarsamp}. The clutter problem \citep{minka2001expectation} is to estimate the mean $\theta \in \R^n$ of a Normal distribution under the assumption that some points are outliers. The task is to sample from the posterior $P$ over $w$ of some empirical sample $\{(x_i)\}_{i=1}^N$. 
\begin{align*}
f_i(\theta) &= \frac{0.5 \exp(- 0.5 \lVert \theta - x_i\rVert^2 ) }{(2\pi)^{n/2}} +  \frac{0.5\exp(- 0.5 \lVert x_i\rVert^2 /100^2) }{100^n (2\pi)^{n/2}}\\
\log g(\theta) &=  -\frac{\lVert\theta\rVert^2}{8} \quad \log f(\theta) = \log g(\theta) + \sum_{i=1}^N \log f_i(\theta)\\
(a, b] &= \{y : a_d < y_d \leq b_d\} \text{ for } a, b \in \R^n\\
M((a, b]) &= \prod_{i=1}^N f_i(x^*(a, b, x_i)) \quad x^*(a, b, x)_d = \begin{cases}
a_d & \text{if } x_d < a_d\\
b_d & \text{if } x_d > b_d\\
x_d & \text{o.w. }\\
\end{cases}\\
\mathrm{split}((a,b], x) &= \{(a,b] \cap \{y : y_s \leq x_s\}, (a,b] \cap \{y : y_s > x_s\}\}\\
&\text{where } s = \argmax_d b_d - a_d
\end{align*}
Our dataset was 6 points $x_i \in \R^n$ of the form $x_i = (a_i, a_i, \ldots, a_i)$ for $a_i \in \{-5, -4, -3 , 3, 4, 5\}$. 
\subsection*{Robust Bayesian regression}
This example is an adaption from \cite{maddison2014astarsamp} with looser bounds. The model is a robust linear regression $y_i = w x_i + \epsilon_i$ where the noise $\epsilon_i$ is distributed as a standard Cauchy and $w$ is a standard Normal. The task is to sample from the posterior $P$ over $w$ of some empirical sample $\{(x_i, y_i)\}_{i=1}^N$. 
\begin{align*}
\log g(w) &=  -\frac{w^2}{8}\\
\log f(w) &= \log g(w)  - \sum_{i=1}^N \log(1 + (wx_i - y_i)^2)\\
M((a, b]) &= \prod_{i=1}^N M_i((a,b]) \quad M_i((a, b]) = \begin{cases}
\exp(a) & \text{if } y_i/x_i < a\\
\exp(b) & \text{if } y_i/x_i > b\\
\exp(y_i/x_i) & \text{o.w. }\\
\end{cases}\\
\mathrm{split}((a, b], x) &= \{(a, x], (x, b]\}
\end{align*}
The dataset was generated by setting $w^* = 2$; $x_i \sim \Normal(0, 1)$ and $y_i = wx_i + \epsilon$ with $\epsilon \sim \Normal(0, 0.1^2)$ for $i \leq N/2$; and $x_i = x_{i - N/2}$ and $y_i = - y_{i-N/2}$ for $i > N/2$. 
\subsection*{Attractive fully connected Ising model}
This is an adaptation of \cite{kim2016lprelaxsamp}. The attractive fully connected Ising model is a distribution over $x \in \{-1, 1\}^n$ described by parameters $w_{ij} \sim \Uniform[0, 0.2]$ and $f_i \sim \Uniform[-1, 1]$.
\begin{align*}
\log g(x) &=  0\\
\log f(x) &= \sum_i f_i x_i + \sum_{i < j \leq n} w_{ij} x_ix_j
\end{align*} 
We considered subsets of the form $B = \{x : x_i = b_i, i \in I\}$ where $I \subseteq \{1, \ldots, n\}$ and $b_i \in \{0, 1\}$. We split on one of the unspecified variables $x_i$ by taking variable whose linear program relaxation was closest to 0.5.
\begin{align*}
\mathrm{split}(B, x) = \{B \cap \{x : x_i = 0\}, B \cap \{x : x_i = 1\}\}
\end{align*}
$\log M(B)$ is computed by solving a linear program relaxation for the following type of integer program. Let $b_i \in \{0, 1\}$ for $1 \leq i \leq n$ and $b_{ijkl} \in \{0, 1\}$ for $1 \leq i < j \leq n$ and $k, l \in \{0, 1\}$.
\begin{align*}
\min_x \sum_i -f_i b_i + f_i (1-b_i)  + \sum_{1\leq i < j \leq n} \sum_{k,l \in \{0,1\}} (-1)^{kl + (1-l)(1-k)}w_{ij}b_{ijkl}
\end{align*} 
subject to the constraints for $1 \leq i < j \leq n$, 
\begin{align*}
\sum_{l \in \{0,1\}} b_{ij0l} = 1 - b_i \quad \sum_{k \in \{0,1\}} b_{ijk0} = 1 - b_j\\
\sum_{l \in \{0,1\}} b_{ij1l} = b_i \quad \sum_{k \in \{0,1\}} b_{ijk1} = b_j
\end{align*}
as the subsets $B$ narrowed we just solved new linear programs with constants for the fixed variables.

\bibliography{maddison_chapter}

\end{document}